\newtheorem{theorem}{Theorem}[section]
\newtheorem{corollary}[theorem]{Corollary}
\newtheorem{proposition}[theorem]{Proposition}
\theoremstyle{definition}
\newtheorem{assumption}[theorem]{Assumption}
\newtheorem{definition}[theorem]{Definition}
\theoremstyle{remark}
\newtheorem{example}[theorem]{Example}
\newcounter{tempthm}
\newcounter{tempsec}
\newcommand{\savecounter}[1]{\newcounter{thmcounter#1}
\setcounter{thmcounter#1}{\value{theorem}}
\newcounter{seccounter#1}
\setcounter{seccounter#1}{\value{section}}}
\newcommand{\usesavedcounter}[1]{\setcounter{tempthm}{\value{theorem}}
\setcounter{theorem}{\value{thmcounter#1}}
\setcounter{tempsec}{\value{section}}
\setcounter{section}{\value{seccounter#1}}}
\newcommand{\restorecounter}{\setcounter{theorem}{\value{tempthm}}
\setcounter{section}{\value{tempsec}}}
\newcommand{\Rm}{\mathbb{R}}
\newcommand{\abskap}{\lvert\kappa\rvert}
\newcommand{\abskapi}{\lvert\kappa_i\rvert}
\newcommand{\shortv}[1]{}
\DeclareMathOperator{\supp}{supp}
\DeclareMathOperator{\sign}{sign}
\title{\LARGE \bf
Structure of Extreme Correlated Equilibria: \\ a Zero-Sum Example and its Implications
}
\author{Noah D. Stein, Asuman Ozdaglar, and Pablo A. Parrilo
\thanks{Department of Electrical Engineering,
        Massachusetts Institute of Technology: Cambridge, MA 02139.
        {\tt\small nstein@mit.edu}, {\tt\small asuman@mit.edu}, and {\tt\small parrilo@mit.edu}.}
\thanks{This research was funded in part by National Science Foundation grants $\text{DMI-}0545910$ and ECCS-$0621922$ and AFOSR MURI subaward $2003\text{-}07688\text{-}1$.}
      }
\begin{document}

\markright{LIDS Technical Report $2829$}

\maketitle

\thispagestyle{headings}

\pagestyle{plain}

\begin{abstract}
We exhibit the rich structure of the set of correlated equilibria by analyzing the simplest of polynomial games: the mixed extension of matching pennies.  We show that while the correlated equilibrium set is convex and compact, the structure of its extreme points can be quite complicated.  In finite games the ratio of extreme correlated to extreme Nash equilibria can be greater than exponential in the size of the strategy spaces.  In polynomial games there can exist extreme correlated equilibria which are not finitely supported; we construct a large family of examples using techniques from ergodic theory.  We show that in general the set of correlated equilibrium distributions of a polynomial game cannot be described by conditions on finitely many moments (means, covariances, etc.), in marked contrast to the set of Nash equilibria which is always expressible in terms of finitely many moments.
\end{abstract}

\section{Introduction}
Correlated equilibria are a natural generalization of Nash equilibria introduced by Aumann \cite{a:scrs}.  They are defined to be joint probability distributions over the players' strategy spaces, such that if each player receives a private recommendation sampled according to the distribution, no player has an incentive to deviate unilaterally from playing his recommended strategy.  In finite games the set of correlated equilibria is a compact convex polytope, and therefore seemingly much simpler than the set of Nash equilibria, which can be essentially any algebraic variety \cite{d:une}.  Even in the simple case of two-player finite games, the set of Nash equilibria is a union of finitely many polytopes: seemingly more complicated than the set of correlated equilibria.

Nonetheless we will see that there are two-player zero-sum games in which the set of correlated equilibria has many more extreme points than the set of Nash equilibria has.  This behavior does not seem to be pathological in any way: it occurs in very simple finite games and the simplest of infinite games.  We take this as evidence that this complexity is likely to be quite common.

\paragraph{Contributions}
\begin{itemize}
\item We give a family of examples of two-player zero-sum finite games in which the set of Nash equilibria has polynomially many extreme points  (Section~\ref{sec:nash}), while the set of correlated equilibria has factorially many extreme points (Section~\ref{sec:corr}).

For bimatrix games, this shows that while extreme Nash equilibria are a subset of the extreme correlated equilibria (see Related Literature below), enumerating all the extreme correlated equilibria is in general a bad way of computing all the extreme Nash equilibria.  In particular, it would be faster to enumerate all subsets of the strategy spaces (there are ``only'' exponentially many) and check whether each was the support of a Nash equilibrium.

\item We give a related example of a continuous game with strategy sets equal to $[-1,1]$ and bilinear utility functions.  This game is just the mixed extension of matching pennies, but we show that it has extreme correlated equilibria with arbitrarily large finite support (Proposition~\ref{prop:mpextfinsuppcorr}) and also with infinite support (Proposition~\ref{prop:mpextcorr}).  This is in contrast to the extreme Nash equilibria, which always have uniformly bounded finite support in zero-sum games with polynomial utilities \cite{karlin:tig}.

Once the existence of Nash equilibria in continuous games has been established \cite{glicksberg:cg}, it is straightforward to show that polynomial games\footnote{For the purposes of this paper a \emph{polynomial game} is one in which the strategy spaces are compact intervals and the utility functions are polynomials in the players' strategies.} admit Nash equilibria with finite support \cite{karlin:tig, sop:slrcg}.  There are more elementary ways of showing that continuous games have correlated equilibria \cite{hs:ece}, but to the authors' knowledge there is no proof that polynomial games have finitely supported correlated equilibria which does not rely on the existence of Nash equilibria.  This example shows that the plausible-sounding proof idea that all extreme correlated equilibria are finitely supported simply isn't true.

\item Comparing Proposition~\ref{prop:momentmeasure} with this example shows that in general there is no finite-dimensional description of the set of correlated equilibria of a zero-sum polynomial game.  That is to say, one cannot check if a measure is a correlated equilibrium merely by examining finitely many generalized moments (parameters such as mean, covariance, etc.\ -- any compactly supported distribution can be specified by countably many such parameters).  Such a description for the Nash equilibria has been known for over fifty years \cite{karlin:tig}.

Intuitively, the reason for this difference is that being a correlated equilibrium is a statement about conditional distributions, and these are too delicate to be controlled by finitely many moments.  This example confirms the intuition.

Experience from finite games suggests that correlated equilibria should be easier to compute than Nash equilibria.  While there are computational methods which converge asymptotically to correlated equilibria of polynomial games \cite{spo:cecgcc}, the only exact algorithm the authors are aware of consists of computing a Nash equilibrium by quantifier elimination (extremely slow), which is possible because of the finite-dimensional description.  In particular, no provably efficient method for computing correlated equilibria of polynomial games exactly or approximately is known.  The lack of a finite-dimensional description of the problem seems to be an important part of what makes it difficult.
\end{itemize}



\paragraph{Related Literature}
The geometry of Nash and correlated equilibria has been studied extensively.  Therefore we only mention work below if it is directly connected to ours and we do not attempt to be exhaustive.

The result most closely related to the present paper states that in two-player finite games, extreme Nash equilibria (viewed as product distributions) are a subset of the extreme correlated equilibria.  Cripps \cite{c:ecnetpg} and Evangelista and Raghavan \cite{er:nce} proved this independently.  This result shows that it makes sense to compare the number of extreme Nash and correlated equilibria.  It also raises the natural question of whether all extreme Nash equilibria could be enumerated efficiently by enumerating the extreme correlated equilibria.  We show that there can be many more extreme correlated equilibria than extreme Nash equilibria, answering this question in the negative.

In a similar vein, Nau et al.\ \cite{nch:ognece} show that for non-trivial finite games with any number of players, the Nash equilibria lie on the boundary of the correlated equilibrium polytope.  With three or more players, the Nash equilibria need not be extreme correlated equilibria.  For example consider the three-player poker game analyzed by Nash in \cite{nash:ncg} which has rational payoffs, hence rational extreme correlated equilibria, but whose unique Nash equilibrium uses irrational probabilities.

Separable games, a generalization of polynomial games, were first studied around the $1950$'s by Dresher, Karlin, and Shapley in papers such as \cite{dks:pg}, \cite{dk:scgfp}, and \cite{ks:gms}, which were later combined in Karlin's book \cite{karlin:tig}.  Their work focuses on the zero-sum case, which contains some of the key ideas for the nonzero-sum case.  In particular, they show how to replace the infinite-dimensional mixed strategy spaces (sets of probability distributions over compact metric spaces) with finite-dimensional moment spaces.  Carath\'{e}odory's theorem \cite{bno:convex} then applies to show that finitely-supported Nash equilibria exist.

There are many similarities between separable games and finite games whose payoff matrices satisfy low-rank conditions.  Lipton et al.\ \cite{lmm:plgss} consider two-player finite games and provide bounds on the cardinality of the support of extreme Nash equilibrium strategies in terms of the ranks of the payoff matrices.  The main technical tool here is again Carath\'{e}odory's theorem.

Germano and Lugosi show that in finite games with three or more players there exist correlated equilibria with smaller support than one might expect for Nash equilibria \cite{gl:essce}.  The proof is geometrical; it essentially views correlated equilibria as living in a subspace of low codimension and it too uses Carath\'{e}odory's theorem \cite{bno:convex}.  

The bounds on the support of equilibria in finite and separable games of the previous three paragraphs are all synthesized in \cite{s:mastersthesis}; the portion on Nash equilibria has appeared in \cite{sop:slrcg}.  The general idea is that simple payoffs (low-rank matrices, low-degree polynomials, etc.) lead to simple Nash equilibria (small support), and those in turn lead to simple correlated equilibria (small support again).

To produce upper bounds on the minimal support of correlated equilibria which depend only on the rank of the payoff matrices and not on the size of the strategy sets, this work does not bound the support of all extreme correlated equilibria, but rather only those whose support is contained inside a Nash equilibrium of small support, which must exist.  Similar results hold for polynomial games with, for example, degree used in place of rank (the notions of degree and rank are generalized in \cite{s:mastersthesis} and \cite{sop:slrcg}).

This work left open the question of whether all extreme correlated equilibria have support size which can be bounded in terms of the rank of the payoff matrices, independently of the size of the strategy sets.  Here we show that this is not the case, because our examples have payoffs which are of rank $1$ and extreme correlated equilibria of arbitrarily large, even infinite, support.

Correlated equilibria without finite support have been defined and studied by several authors.  An important example of this line of research is the paper by Hart and Schmeidler \cite{hs:ece}.  The definition of correlated equilibria presented in \cite{hs:ece} is convenient for proving some theoretical results (they focus on existence) but not usually for computation.

The authors of the present paper have developed several equivalent characterizations of correlated equilibria in continuous games which are more suitable for computation \cite{spo:cecgcc}.  One of these forms the basis for the analysis in Section~\ref{sec:corr} below.  Other such characterizations lead to algorithms for approximating correlated equilibria of continuous games \cite{spo:cecgcc}.

\paragraph{Outline}
The remainder of this paper is organized as follows.  Section $2$ introduces the examples to be studied.  The two types of example are closely related -- the finite game examples are just restrictions of the strategy spaces in the infinite game example to fixed finite sets.  This allows us to analyze both examples on equal footing.  In Section $3$ we define and compute the extreme Nash equilibria of these examples, counting them in the finite game example.  Then we define and analyze the extreme correlated equilibria in Section $4$.  This analysis is somewhat long and at times technical, so we present a detailed roadmap before beginning.  We close with Section $5$, where we outline directions for future work.

\section{Description of the examples}
\label{sec:examples}

First we fix notation.  When $S$ is a topological space, $\Delta(S)$ will denote the set of Borel probability measures on $S$ and $\Delta^*(S)$ the set of finite Borel measures on $S$.  In particular $\Delta(S)$ is the set of measures in $\Delta^*(S)$ with unit mass.  If $S$ is finite it will be given the discrete topology by default so $\Delta(S)$ is a simplex and $\Delta^*(S)$ is an orthant in $\Rm^{\lvert S \rvert}$.  We abuse notation and write the measure of a singleton $\{p\}$ as $\mu(p)$ rather than $\mu(\{p\})$.  For any $p\in S$, define $\delta_p\in \Delta(S)$ to be the measure which assigns unit mass to the point $p$.  Let $I = [-1,1]\subset\Rm$.

We will focus on two related examples, one with finite strategy sets and one with infinite strategy sets.  We will develop them in parallel by analyzing arbitrary games satisfying the following condition.  The condition does not have any game theoretic content; it was merely chosen for simplicity and the results to which it leads.

\begin{assumption}
\label{assump:mp}
The game is a zero-sum strategic form game with two players, called $X$ and $Y$.  The strategy sets $C_X$ and $C_Y$ are compact subsets of $I = [-1,1]$, each of which contains at least one positive element and at least one negative element.  Player $X$ chooses a strategy $x\in C_X$ and player $Y$ chooses $y\in C_Y$.  The utility functions\footnote{By inspection of the utilities we can see that for any $C_X$ and $C_Y$ with at least two points, the rank of this game in the sense of \cite{sop:slrcg} is $(1,1)$ (and in fact also in the stronger sense of Theorem $3.3$ of that paper).  The notion of the rank of a game is related to the rank of the payoff matrices and will not play a significant role in this paper; we merely wish to note that under this definition of complexity of payoffs, the games we consider are extremely simple.} are $u_X(x,y) = xy = -u_Y(x,y)$.
\end{assumption}

\savecounter{exfin}
\begin{example}
\label{ex:finite}
Fix an integer $n>0$.  Let $C_X$ and $C_Y$ each have $2n$ elements, $n$ of which are positive and $n$ of which are negative.  If we take $n = 1$ and $C_X = C_Y = \{-1,1\}$ then we recover the matching pennies game, as shown in Table~\ref{tab:matchingpennies}.
\end{example}

\begin{table}
\centering
\begin{tabular}{|c|cc|}
\hline
$(u_X,u_Y)$ & $x = -1$ & $x = 1$ \\
\hline
$y = -1$ & $(1,-1)$ & $(-1,1)$\\
$y = 1$ & $(-1,1)$ & $(1,-1)$\\
\hline
\end{tabular}
\caption{Utilities for matching pennies}
\label{tab:matchingpennies}
\end{table}

\savecounter{exinf}
\begin{example}
\label{ex:infinite}
Let $C_X = C_Y = [-1,1]$.  Then the game is essentially the mixed extension of matching pennies.  That is to say, suppose two players play matching pennies and choose their strategies independently, playing $1$ with probabilities $p\in [0,1]$ and $q\in [0,1]$.  Define the utilities for the mixed extension to be the expected utilities under this random choice of strategies.  Letting $x = 2p - 1$ and $y = 2q - 1$, the utility to the first player is $xy$ and the utility to the second player is $-xy$.  Therefore this example is the mixed extension of matching pennies, up to an affine scaling of the strategies.
\end{example}

Usually one looks at pure equilibria of the mixed extension of a game; these are exactly the mixed equilibria of the original game.  We will instead be looking at mixed Nash equilibria and correlated equilibria of the mixed extension itself, a game with a continuum of actions.  The relationship between correlated equilibria of the mixed extension and those of the original game is much more complicated than the corresponding relationship for mixed Nash equilibria.  This drives the results of the paper.

\section{Extreme Nash equilibria}
\label{sec:nash}
We now characterize and count the extreme points of the sets of Nash equilibria in games satisfying Assumption~\ref{assump:mp}.  Since the games are zero-sum, the set of Nash equilibria can be viewed as a Cartesian product of two (weak*) compact convex sets, the sets of maximin and minimax strategies \cite{glicksberg:cg}.  The Krein-Milman theorem completely characterizes such sets by their extreme points \cite{r:fa}, explaining our focus on extreme points throughout.  

We define Nash equilibria in two-player games, which will be sufficient for our purposes, as well as the standard notions of extreme point and extreme ray from convex analysis.  

\begin{definition}A \textbf{Nash equilibrium} is a pair $(\sigma,\tau)\in \Delta(C_X)\times\Delta(C_Y)$ such that $u_X(x,\tau)\leq u_X(\sigma,\tau)$ for all $x\in C_X$ and $u_Y(\sigma,y)\leq u_Y(\sigma,\tau)$ for all $y\in C_Y$ (where we extend utilities by expectation in the usual fashion $u_X(x,\tau) = \int u_X(x,y)\,d\tau(y)$, etc.).
\end{definition}

In other words, a Nash equilibrium is a strategy pair in which each player is playing a best reply to his opponent's strategy.

\begin{definition}
A point $x$ in a (usually convex) subset $K$ of a real vector space is an \textbf{extreme point} if $x = \lambda y + (1-\lambda)z$ for $y,z\in K$ and $\lambda\in (0,1)$ implies $x = y = z$.
\end{definition}

The related notion of extreme ray will not be used until the next section, but we record it here for comparison.

\begin{definition}
A convex set $K$ such that $x\in K$ and $\lambda\geq 0$ implies $\lambda x \in K$ is called a \textbf{convex cone}.  A point $x\neq 0$ is an \textbf{extreme ray} of the convex cone $K$ if $x = y + z$ and $y,z\in K$ implies that $y$ or $z$ is a scalar multiple of $x$.
\end{definition}

The Nash equilibria of games satisfying Assumption~\ref{assump:mp} take the following particularly simple form.

\begin{proposition}
\label{prop:nashzeromean}
A pair $(\sigma,\tau)\in\Delta(C_X)\times\Delta(C_Y)$ is a Nash equilibrium of a game satisfying Assumption~\ref{assump:mp} if and only if $\int x\,d\sigma(x) = \int y\,d\tau(y) = 0$.
\end{proposition}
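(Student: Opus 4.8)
The plan is to reduce every condition to a statement about the two means $m_X = \int x\,d\sigma(x)$ and $m_Y = \int y\,d\tau(y)$, exploiting the bilinearity of the payoff. First I would compute the extended utilities: since $u_X(x,y)=xy$ is linear in each argument, $u_X(x,\tau)=x\,m_Y$ and $u_X(\sigma,\tau)=m_X m_Y$, and likewise $u_Y(\sigma,y)=-m_X\,y$ and $u_Y(\sigma,\tau)=-m_X m_Y$. Substituting these into the definition of Nash equilibrium turns the two best-reply inequalities into
\[(x-m_X)\,m_Y\le 0\ \text{ for all }x\in C_X,\qquad (y-m_Y)\,m_X\ge 0\ \text{ for all }y\in C_Y.\]
Call these (A) and (B). The reverse implication is then immediate: if $m_X=m_Y=0$, both (A) and (B) read $0\le 0$, so $(\sigma,\tau)$ is a Nash equilibrium.

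For the forward implication I would first show $m_Y=0$ by contradiction, and here is the one point that needs care: condition (A) constrains $m_X$ only through the \emph{sign} of $m_Y$, and (B) constrains $m_Y$ only through the sign of $m_X$, so the two conditions must be chained together rather than used in isolation. Fix positive elements $x_+\in C_X$, $y_+\in C_Y$ and negative elements $x_-\in C_X$, $y_-\in C_Y$, which exist by Assumption~\ref{assump:mp}. If $m_Y>0$, then (A) forces $x\le m_X$ for every $x\in C_X$; taking $x=x_+$ gives $m_X\ge x_+>0$, whereupon (B) forces $y\ge m_Y$ for every $y\in C_Y$, and $y=y_-$ gives $m_Y\le y_-<0$, a contradiction. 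The case $m_Y<0$ is entirely symmetric, using $x_-$ and $y_+$ in place of $x_+$ and $y_-$. Hence $m_Y=0$.

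Finally, with $m_Y=0$ condition (B) collapses to $y\,m_X\ge 0$ for all $y\in C_Y$; evaluating at $y_+>0$ gives $m_X\ge 0$ and at $y_-<0$ gives $m_X\le 0$, so $m_X=0$ as well, completing the proof. I expect no serious obstacle here: the only thing requiring attention is the bookkeeping in the sign analysis and the recognition that the two best-reply conditions must be invoked jointly to pin down a single mean. The hypothesis that each strategy set contains both a positive and a negative element is precisely what drives every contradiction, so I would flag that this assumption is used essentially. As an alternative framing worth a remark, one can instead observe that playing any $m_X=0$ guarantees player $X$ a payoff of exactly $0$ and likewise for $Y$, so the value of the zero-sum game is $0$; the equilibrium (maximin/minimax) strategies are then exactly those achieving this value, which are characterized by the vanishing of the respective means by the same sign argument.
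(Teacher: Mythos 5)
Your proof is correct and follows essentially the same route as the paper's: both reduce everything to the means via bilinearity, dispose of the reverse direction by noting the payoffs vanish identically, and prove the forward direction by chaining the two best-reply conditions through a sign contradiction that uses the existence of positive and negative elements in each strategy set. Your version merely spells out the best-reply inequalities more explicitly.
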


\begin{proof}
If $\int x\,d\sigma(x) = 0$ then $u_Y(\sigma, y) = 0$ for all $y\in C_Y$, so any $\tau\in\Delta(C_Y)$ is a best response to $\sigma$.  If $\int y\,d\tau(y) = 0$ as well then $\sigma$ is also a best response to $\tau$, so $(\sigma,\tau)$ is a Nash equilibrium.

Suppose for a contradiction that there exists a Nash equilibrium $(\sigma,\tau)$ such that $\int x\,d\sigma(x) > 0$; the other cases are similar.  Player $y$ must play a best response, so $\int y\,d\tau(y) < 0$, which is possible by assumption.  Player $x$ plays a best response to that, so $\int x\,d\sigma(x) < 0$, a contradiction.  
\end{proof}

We introduce the notion of extreme Nash equilibrium in two-player zero-sum games.  For an extension of this definition to two-player finite games and a proof that extreme Nash equilibria are always extreme points of the set of correlated equilibria in this setting, see \cite{c:ecnetpg} or \cite{er:nce}.

\begin{definition}
In a two-player zero-sum game, \textbf{maximin} and \textbf{minimax} strategies are those mixed strategies for player $X$ and $Y$, respectively, which appear in a Nash equilibrium.  A Nash equilibrium of a zero-sum game is called \textbf{extreme} if $\sigma$ and $\tau$ are extreme points of the maximin and minimax sets, respectively.
\end{definition}

Applying Proposition~\ref{prop:nashzeromean} to this definition, we can characterize the extreme Nash equilibria of games satisfying Assumption~\ref{assump:mp}.

\begin{proposition}
\label{prop:extremenash}
Consider a game satisfying Assumption~\ref{assump:mp}.  A pair $(\sigma,\tau)\in\Delta(C_X)\times\Delta(C_Y)$ is an extreme Nash equilibrium if and only if $\sigma$ and $\tau$ are each either $\delta_0$ or of the form $\alpha\delta_u + \beta\delta_v$ where $u<0$, $v, \alpha,\beta>0$, $\alpha+\beta = 1$, and $\alpha u + \beta v = 0$.
\end{proposition}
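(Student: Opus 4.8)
The plan is to reduce the statement to a characterization of the extreme points of a single mean-zero moment set and then apply that characterization symmetrically to the two players. By the definition of extreme Nash equilibrium, $(\sigma,\tau)$ is extreme exactly when $\sigma$ is an extreme point of the maximin set and $\tau$ is an extreme point of the minimax set, so the two coordinates decouple. First I would use Proposition~\ref{prop:nashzeromean} to identify these sets explicitly: since a mean-zero $\tau$ exists (as $C_Y$ contains elements of both signs) and partners every mean-zero $\sigma$, the maximin set is exactly $M_X = \{\sigma\in\Delta(C_X) : \int x\,d\sigma(x) = 0\}$, and symmetrically the minimax set is $M_Y = \{\tau\in\Delta(C_Y) : \int y\,d\tau(y) = 0\}$. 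Thus it suffices to show that the extreme points of $M = \{\mu\in\Delta(C) : \int x\,d\mu(x) = 0\}$, for $C\subseteq I$ compact containing points of both signs, are precisely $\delta_0$ (when $0\in C$) together with the two-point measures of the stated form.

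The core step, and the step I expect to be the main obstacle, is the upper bound: an extreme point of $M$ is supported on at most two points. I would prove the contrapositive by a perturbation argument that handles possibly non-finitely-supported $\mu$. If $\mu\in M$ has at least three points in its support, choose three pairwise disjoint Borel sets $A_1,A_2,A_3$ with $m_i := \mu(A_i) > 0$ (small neighborhoods of three support points), and consider signed measures of the form $\nu = \sum_{i} (c_i/m_i)\,\mu|_{A_i}$. A short computation gives $\int d\nu = \sum_i c_i$ and $\int x\,d\nu = \sum_i c_i\bar x_i$, where $\bar x_i$ is the $\mu$-conditional mean of $x$ on $A_i$; the homogeneous conditions $\sum_i c_i = 0$ and $\sum_i c_i\bar x_i = 0$ are two linear equations in three unknowns, so a nonzero solution $(c_1,c_2,c_3)$ exists, and then $\nu\neq 0$ since the $\mu|_{A_i}$ have disjoint supports. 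For this $\nu$ we have $\mu\pm\epsilon\nu = (1\pm\epsilon c_i/m_i)\,\mu|_{A_i}$ on each $A_i$ and $\mu\pm\epsilon\nu = \mu$ elsewhere, so for small enough $\epsilon>0$ both $\mu\pm\epsilon\nu$ are genuine probability measures lying in $M$ and distinct from $\mu$. Hence $\mu = \tfrac12(\mu+\epsilon\nu)+\tfrac12(\mu-\epsilon\nu)$ exhibits $\mu$ as a nontrivial convex combination, so $\mu$ is not extreme.

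It remains to identify which measures with support of size at most two lie in $M$ and to confirm they are extreme. A single-point measure $\delta_p\in M$ forces $p=0$, giving $\delta_0$. A genuinely two-point measure $\mu = \alpha\delta_u+\beta\delta_v$ in $M$ (with $u\neq v$ and $\alpha,\beta>0$, $\alpha+\beta=1$) must satisfy $\alpha u+\beta v=0$, and since $\alpha,\beta>0$ this forces $u$ and $v$ to have opposite signs; relabeling gives $u<0<v$, exactly the stated form. For extremality I would note that $\delta_0$ is already extreme in the larger set $\Delta(C)$ and hence in the subset $M$. For the two-point measures, any representation $\mu = \lambda\mu_1+(1-\lambda)\mu_2$ with $\mu_1,\mu_2\in M$ and $\lambda\in(0,1)$ forces $\mu_1,\mu_2$ to be supported on $\{u,v\}$; writing $\mu_i = \alpha_i\delta_u+\beta_i\delta_v$, the equations $\alpha_i+\beta_i=1$ and $\alpha_i u+\beta_i v=0$ have the invertible coefficient matrix $\bigl(\begin{smallmatrix}1&1\\u&v\end{smallmatrix}\bigr)$ (as $u\neq v$), so $(\alpha_i,\beta_i)$ is uniquely determined and $\mu_1=\mu_2=\mu$. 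Applying this characterization of $\mathrm{ext}(M)$ to both $M_X$ and $M_Y$ yields the proposition.
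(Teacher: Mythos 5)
Your proof is correct, and it follows the same overall strategy as the paper (reduce via Proposition~\ref{prop:nashzeromean} to characterizing the extreme points of the zero-mean probability measures on a compact $C\subseteq I$, verify extremality of $\delta_0$ and of the two-point measures by the same support-plus-uniqueness argument), but the key step --- showing that an extreme point cannot have more than two support points --- is done by a genuinely different mechanism. The paper argues via extreme rays: it partitions the nonnegative part of the support into two sets $A$ and $B$ of positive measure, splits the negative part proportionally to the first moments $a$ and $b$, and writes $\sigma$ as a sum of two positive zero-mean measures neither of which is a multiple of $\sigma$. You instead use a perturbation argument: three disjoint positive-measure sets give three free coefficients subject to two linear moment constraints (mass and mean), so a nonzero perturbation $\nu$ exists with $\mu\pm\epsilon\nu$ both in the set for small $\epsilon$. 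Your argument is essentially the paper's own Proposition~\ref{prop:momentmeasure} specialized to $d=2$, so it is the more general and more mechanical route (and it sidesteps the paper's unproved assertion that an extreme point of the zero-mean probability measures is an extreme ray of the zero-mean positive cone, which your convex-combination formulation never needs); the paper's construction is more concrete and self-contained but exploits the specific sign structure of the first moment. A further small point in your favor: you explicitly check that the maximin set equals the full zero-mean set (using that $C_Y$ contains points of both signs to produce a partner $\tau$), a step the paper leaves implicit.
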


\begin{proof}
By Proposition~\ref{prop:nashzeromean} we must show that these distributions are the extreme points of the set of probability distributions having zero mean.  Since $\delta_0$ is an extreme point of the set of probability distributions, it must be an extreme point of the subset which has zero mean.  To see that $\alpha\delta_u + \beta\delta_v$ is also an extreme point, suppose we could write it as a convex combination of two other probability distributions with zero mean.  The condition that both be positive measures implies that both must be of the form $\alpha'\delta_u + \beta'\delta_v$.  But $\alpha$ and $\beta$ as specified above are the unique coefficients which make this be a probability measure with zero mean.  Therefore $\alpha' = \alpha$ and $\beta' = \beta$, so $\alpha\delta_u + \beta\delta_v$ cannot be written as a nontrivial convex combination of probability distributions with zero mean, i.e., it is an extreme point.

Suppose $\sigma$ were an extreme point which was not of one of these types.  Then $\sigma$ could not be supported on one or two points, so either $[0,1]$ or $[-1,0)$ could be partitioned into two sets of positive measure.  We will only treat the first case; the second is similar.  Let $[0,1] = A\cup B$ where $A\cap B = \emptyset$ and $\sigma(A),\sigma(B)>0$.  Since $\sigma$ has zero mean we must have $\sigma([-1,0)) > 0$ as well.

For a set $D$ we define the restriction measure $\sigma\vert_D$ by $\sigma\vert_D(C) = \sigma(D\cap C)$ for all $C$.  Then $\sigma = \sigma\vert_A + \sigma\vert_B + \sigma\vert_{[-1,0)}$.  Let $a = \int_A x\,d\sigma(x)$, $b = \int_B x\,d\sigma(x)$, and $c = \int_{[-1,0)}x\,d\sigma(x)$.  Since $\sigma([-1,0))>0$ and $x$ is less than zero everywhere on $[-1,0)$, we must have $c < 0$ and similarly $a,b\geq 0$.  By assumption $a + b + c = 0$.  Therefore we can write:
\begin{equation*}
\sigma = \left(\sigma\vert_A + \frac{a}{\lvert c \rvert} \sigma\vert_{[-1,0)}\right) + \left(\sigma\vert_B + \frac{b}{\lvert c \rvert}\sigma\vert_{[-1,0)}\right)
\end{equation*}

Being an extreme point of the set of probability measures with zero mean, $\sigma$ must be an extreme ray of the set of positive measures with first moment equal to zero.  But this means that we cannot write $\sigma = \sigma_1 + \sigma_2$ where the $\sigma_i$ are positive measures with zero first moment unless $\sigma_i$ is a multiple of $\sigma$.  Neither of the measures in parentheses above is a multiple of $\sigma$, so we have a contradiction.
\end{proof}

We illustrate this proposition on both examples introduced in Section~\ref{sec:examples}.

\usesavedcounter{exfin}
\begin{example}[cont'd]
In this case neither $C_X$ nor $C_Y$ contains zero, so the only extreme Nash equilibria are those in which $\sigma$ and $\tau$ are of the form $\alpha\delta_u + \beta\delta_v$ for $u<0$ and $v>0$.  For any choice of $u$ and $v$ simple algebra gives unique $\alpha$ and $\beta$ satisfying the conditions of Proposition~\ref{prop:extremenash}.  There are $n$ possible choices for each of $u$ and $v$ for each of the two players, so there are $n^4$ extreme Nash equilibria.
\end{example}
\restorecounter

\usesavedcounter{exinf}
\begin{example}[cont'd]
Since $C_X = C_Y = [-1,1]$, there are infinitely many extreme Nash equilibria in this case.  However, they are all finitely supported and the size of the support of each player's strategy is always either one or two.  Furthermore the condition that $(\sigma,\tau)$ be a Nash equilibrium is equivalent to both having zero mean.  This illustrates the general facts that in games with polynomial utility functions the Nash equilibrium conditions only involve finitely many moments of $\sigma$ and $\tau$ (in this case, only the mean) and the extreme Nash equilibria (when defined, say for zero-sum games) have uniformly bounded support \cite{karlin:tig}.
\end{example}
\restorecounter

\section{Extreme correlated equilibria}
\label{sec:corr}
In this section we will show that even in finite games, the number of extreme correlated equilibria can be much larger than the number of extreme Nash equilibria.  It makes sense to compare these because all extreme Nash equilibria of a two-player game, viewed as product distributions, are automatically extreme correlated equilibria \cite{c:ecnetpg,er:nce}.

In the case of polynomial games we will show that there can be extreme correlated equilibria with arbitrarily large finite support and with infinite support.  This implies that the set of correlated equilibria cannot be characterized in terms of finitely many joint moments.

\paragraph{Roadmap}The analysis proceeds in several steps which will be technical at times, so we start with an outline of what follows.  
\begin{itemize}
\item We begin by defining correlated equilibria in games satisfying Assumption~\ref{assump:mp} using a characterization from \cite{spo:cecgcc}.
\item Proposition~\ref{prop:mpcorreqchar} shows that this characterization can be simplified because of our choice of utility functions.
\item We use this characterization to construct a family of finitely supported extreme correlated equilibria in Proposition~\ref{prop:mpextfinsuppcorr}.
\item Then we note that all extreme correlated equilibria of the games in Example~\ref{ex:finite} are of this form, so this allows us to count the extreme correlated equilibria and determine their asymptotic rate of growth as the number of pure strategies grows.
\item Next we introduce some ideas from ergodic theory.  With these tools in hand, we construct in Proposition~\ref{prop:mpextcorr} a large family of extreme correlated equilibria without finite support for the game in Example~\ref{ex:infinite}.
\item Finally we show that if a set can be represented by finitely many moments then all its extreme points have uniformly bounded finite support.  This shows that the set of correlated equilibria of the game in Example~\ref{ex:infinite} cannot be represented by finitely many moments and completes the analysis.
\end{itemize}

Having completed the roadmap, we are ready to begin.  Correlated equilibria are meant to capture the notion of a joint distribution of private recommendations to the two players such that neither player can expect to improve his payoff by deviating unilaterally from his recommendation.  For finitely supported probability distributions and games satisfying Assumption~\ref{assump:mp}, this can be written as per the standard definition (see \cite{myerson:gt} or \cite{ft:gt}):

\begin{definition}
\label{def:ce}A finitely supported probability distribution $\mu\in\Delta(C_X\times C_Y)$ is a \textbf{correlated equilibrium} of a game satisfying Assumption~\ref{assump:mp} if
\[
\sum_{y\in C_Y} \mu(x,y)[xy - x'y]\geq 0
\]
for all $x,x'\in C_X$ and 
\[
\sum_{x\in C_X}\mu(x,y)[xy'-xy]\geq 0
\]
for all $y,y'\in C_Y$ (note that these sums are finite by the assumption on $\mu$).
\end{definition}
The standard definition extending this notion to arbitrary (not necessarily finitely supported) distributions is given in \cite{hs:ece}.  This definition is difficult to compute with, so we will use the following equivalent characterization.

\begin{proposition}[\cite{spo:cecgcc}]
\label{prop:correqchar1}A probability distribution $\mu\in\Delta(C_X\times C_Y)$ is a correlated equilibrium of a game satisfying Assumption~\ref{assump:mp} if and only if
\begin{equation*}
\int_{A\times I} (xy-x'y)\,d\mu(x,y)\geq 0\text{\ \ \ and\ \ \ }\int_{I\times A}(xy-xy')\,d\mu(x,y)\leq 0
\end{equation*}
for all $x'\in C_X$, $y'\in C_Y$, and measurable $A\subseteq I$.
\end{proposition}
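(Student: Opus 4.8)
The plan is to work from the deviation-function definition of correlated equilibrium for general distributions given in \cite{hs:ece}, which for a game satisfying Assumption~\ref{assump:mp} asserts that $\mu$ is a correlated equilibrium if and only if $\int (xy - \zeta(x)y)\,d\mu(x,y)\ge 0$ for every measurable map $\zeta\colon C_X\to C_X$ and, symmetrically, $\int (xy - x\eta(y))\,d\mu(x,y)\le 0$ for every measurable map $\eta\colon C_Y\to C_Y$. I treat the two players separately and symmetrically, so I describe only the argument for player $X$; the bound for player $Y$ then follows verbatim after exchanging the roles of the two coordinates and reversing the inequality.

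For the forward implication (a correlated equilibrium satisfies the stated integral inequalities), I fix $x'\in C_X$ and a measurable set $A\subseteq I$ and apply the deviation definition to the particular map $\zeta = x'\,\mathbf{1}_{A} + x\,\mathbf{1}_{I\setminus A}$, that is, $\zeta(x)=x'$ for $x\in A$ and $\zeta(x)=x$ otherwise. This is a measurable $C_X$-valued function, and since $\mu$ is carried by $C_X\times C_Y$ the identity part of $\zeta$ contributes nothing, so the deviation inequality collapses to $\int_{A\times I}(xy - x'y)\,d\mu(x,y)\ge 0$, which is exactly the first condition of the Proposition.

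The substantive direction is the converse: I must recover the inequality for an \emph{arbitrary} measurable deviation $\zeta$ from the inequalities indexed only by single points $x'$ and sets $A$. First I handle a simple deviation $\zeta = \sum_{j} x'_j\,\mathbf{1}_{A_j}$, where the $A_j$ form a finite measurable partition of $C_X$ and each $x'_j\in C_X$: here $\int(xy-\zeta(x)y)\,d\mu = \sum_j \int_{A_j\times I}(xy - x'_j y)\,d\mu \ge 0$ is just a finite sum of hypothesized nonnegative quantities. For a general measurable $\zeta\colon C_X\to C_X$ I approximate by simple functions valued in $C_X$: using compactness of $C_X$, I partition it into finitely many Borel pieces of diameter at most $1/k$, choose a representative $c_j\in C_X$ in each, and set $\zeta_k(x)=c_j$ whenever $\zeta(x)$ lies in the $j$-th piece. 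Then each $\zeta_k$ is a $C_X$-valued simple function with $\lVert\zeta_k-\zeta\rVert_\infty\le 1/k$, and the integrand $(xy-\zeta_k(x)y)$ is bounded by $2$ in absolute value on $C_X\times C_Y\subseteq I\times I$. Passing to the limit by dominated convergence yields $\int(xy-\zeta(x)y)\,d\mu = \lim_k \int(xy-\zeta_k(x)y)\,d\mu \ge 0$, which completes the converse for player $X$.

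I expect the only delicate point to be this last approximation, specifically arranging that the approximating simple functions take their values in $C_X$ rather than merely in $I$ (so that the point-indexed hypotheses apply) and justifying the interchange of limit and integral; both are handled by the compactness of the strategy set and the boundedness of the bilinear payoff. Everything else is a bookkeeping translation between the measurable-deviation formulation of \cite{hs:ece} and the set-indexed inequalities, and the player-$Y$ half is identical up to a reflection of the coordinates and a change of sign.
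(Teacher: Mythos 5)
Your argument is correct, but it takes a genuinely different route from the paper. The paper's entire proof is two sentences: the finitely supported case is observed to coincide with Definition~\ref{def:ce}, and the general case is delegated to part $(1)$ of Corollary $2.14$ of \cite{spo:cecgcc} with these utilities substituted in. You instead prove the equivalence with the Hart--Schmeidler measurable-deviation definition directly, and both halves of your argument are sound: the forward direction via the deviation $\zeta = x'\,\mathbf{1}_A + x\,\mathbf{1}_{C_X\setminus A}$ is exactly the right test function, and the converse via simple $C_X$-valued deviations followed by a uniform approximation (diameter-$1/k$ Borel partition of the compact set $C_X$, representatives chosen inside $C_X$ so the point-indexed hypotheses apply) is airtight --- note that since $\lvert(\zeta(x)-\zeta_k(x))y\rvert\le 1/k$ uniformly, you do not even need dominated convergence, just uniform convergence of the integrands against a finite measure. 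What the paper's approach buys is brevity and reliance on a result already proved in the companion paper for general continuous games; what yours buys is a self-contained, elementary proof whose only game-specific input is boundedness and (bi)linearity of the payoff in the deviating coordinate, and which would generalize to any continuous utility by replacing the bilinearity estimate with uniform continuity. The one presentational caveat is that your proof presupposes a particular reading of the definition in \cite{hs:ece} (deviations as arbitrary measurable self-maps of the strategy set); since the paper explicitly adopts that reference as the definition of correlated equilibrium for non-finitely-supported $\mu$, this is a legitimate starting point, but you should state the deviation inequality you are taking as the definition rather than leaving it implicit.
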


\begin{proof}
When $\mu$ is finitely supported this is clearly equivalent to Definition~\ref{def:ce}.  The general case is part $(1)$ of Corollary $2.14$ in \cite{spo:cecgcc} with the present utilities substituted in.
\end{proof}

Note that these conditions are homogeneous (that is, invariant under positive scaling) in $\mu$.  The only condition on $\mu$ that is not homogeneous is the probability measure condition $\mu(I\times I) = 1$.  We will often ignore this condition to avoid having to normalize every expression, referring to a measure $\mu\in\Delta^*(C_X\times C_Y)$ satisfying the conditions of the proposition as a correlated equilibrium.

\begin{definition}When we need to distinguish these notions, we will refer to a measure $\mu\in\Delta^*(C_X\times C_Y)$ satisfying the conditions of Proposition~\ref{prop:correqchar1} as a \textbf{homogeneous correlated equilibrium} and a measure $\mu\in\Delta(C_X\times C_Y)$ satisfying the conditions as a \textbf{proper correlated equilibrium}.  In the context of homogeneous correlated equilibria the term \textbf{extreme} will refer to extreme rays; for proper correlated equilibria it will refer to extreme points.
\end{definition}

When $\mu\neq 0$ is a homogenous correlated equilibrium, $\frac{1}{\mu(I\times I)}\mu$ is a proper correlated equilibrium.  The set of homogenous correlated equilibria is a convex cone.  The extreme rays of this cone are exactly those measures which are positive multiples of the extreme points of the set of proper correlated equilibria.

The following proposition characterizes correlated equilibria of games satisfying Assumption~\ref{assump:mp} and is analogous to Proposition~\ref{prop:nashzeromean} for Nash equilibria.  Note how the Nash equilibrium measures were characterized in terms of their moments but the correlated equilibria are not.  Whereas the Nash equilibria are pairs of mixed strategies with zero mean for each player, condition (\ref{item:mpxint}) of this proposition says that the correlated equilibria are joint distributions such that regardless of each player's own recommendation, the conditional mean of his opponent's recommended strategy is zero.

\begin{proposition}
\label{prop:mpcorreqchar}
For a game satisfying Assumption~\ref{assump:mp} and a measure $\mu\in\Delta^*(C_X\times C_Y)$ such that $xy\neq 0$ $\mu$-a.e., the following are equivalent:
\begin{enumerate}
\item \label{item:mpcorreq}$\mu$ is a correlated equilibrium;
\item \label{item:mpxyint}
\begin{equation*}
\kappa_x(A) := \int_{A\times I} xy\,d\mu(x,y)\text{\ \ \ and\ \ \ }\kappa_y(A) := \int_{I\times A}xy\,d\mu(x,y)
\end{equation*}
are both the zero measure, i.e., equal zero for all measurable $A\subseteq I$;
\item \label{item:mpxint}
\begin{equation*}
\lambda_x(A) := \int_{A\times I} y\,d\mu(x,y)\text{\ \ \ and\ \ \ }\lambda_y(A) := \int_{I\times A} x\,d\mu(x,y)
\end{equation*}
are both the zero measure.
\end{enumerate}
\end{proposition}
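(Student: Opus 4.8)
The plan is to prove the cycle of implications, treating $(2)\Leftrightarrow(3)$ and $(3)\Rightarrow(1)$ as bookkeeping and concentrating the real work on $(1)\Rightarrow(3)$. Throughout I will work with the finite signed measures on $I$ defined by $\lambda_x(B)=\int_{B\times I}y\,d\mu$ and $\lambda_y(B)=\int_{I\times B}x\,d\mu$ (these are exactly the set functions appearing in $(3)$), together with the defining identities $\int g(x)\,d\lambda_x(x)=\int g(x)y\,d\mu(x,y)$ and $\int h(y)\,d\lambda_y(y)=\int x\,h(y)\,d\mu(x,y)$, valid for bounded measurable $g,h$ by approximation with simple functions. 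In particular $\kappa_x(B)=\int_B x\,d\lambda_x$ and $\kappa_y(B)=\int_B y\,d\lambda_y$, so $(2)$ and $(3)$ differ only by the weights $x$ and $y$. Since $\mu$ is supported on $C_X\times C_Y$ and $\lambda_x$ is absolutely continuous with respect to the $x$-marginal $\mu_X$, the measure $\lambda_x$ lives on $C_X$, and symmetrically $\lambda_y$ lives on $C_Y$.

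First I would dispose of $(2)\Leftrightarrow(3)$. The implication $(3)\Rightarrow(2)$ is immediate from $\kappa_x(B)=\int_B x\,d\lambda_x$. For the converse, the hypothesis $xy\neq 0$ $\mu$-a.e.\ forces $x\neq 0$ off a $\mu_X$-null, hence $\lvert\lambda_x\rvert$-null, set; so $\kappa_x=0$, i.e.\ $\int_B x\,d\lambda_x=0$ for all $B$, forces the measure $x\,d\lambda_x$ to vanish and therefore $\lambda_x=0$, and symmetrically for $y$. I would also record the easy implication $(3)\Rightarrow(1)$: if $\lambda_x=0$ then both $\lambda_x(A)$ and $\kappa_x(A)=\int_A x\,d\lambda_x$ vanish, so the first integrand of Proposition~\ref{prop:correqchar1} integrates to zero; the second is handled the same way.

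The substance is $(1)\Rightarrow(3)$. Rewriting the first inequality of Proposition~\ref{prop:correqchar1} as $\int_A(x-x')\,d\lambda_x\geq 0$ for every measurable $A$ says exactly that the signed measure $(x-x')\,d\lambda_x$ is nonnegative. Via the Hahn decomposition $\lambda_x=\lambda_x^+-\lambda_x^-$ this means $x\geq x'$ holds $\lambda_x^+$-a.e.\ and $x\leq x'$ holds $\lambda_x^-$-a.e., for each fixed $x'\in C_X$. Intersecting these statements over a countable dense subset of $C_X$ and using compactness gives $x\geq \max C_X$ $\lambda_x^+$-a.e.\ and $x\leq \min C_X$ $\lambda_x^-$-a.e.; since $\lambda_x$ lives on $C_X$ the reverse inequalities hold automatically, so $\lambda_x^+$ and $\lambda_x^-$ are concentrated at the single points $\max C_X$ and $\min C_X$. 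Hence $\lambda_x=c_+\delta_{\max C_X}-c_-\delta_{\min C_X}$ with $c_\pm\geq 0$, and the analogous argument applied to the second (reversed) inequality yields $\lambda_y=d_+\delta_{\min C_Y}-d_-\delta_{\max C_Y}$ with $d_\pm\geq 0$.

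It remains to show the coefficients vanish, and here the zero-sum structure closes the argument. Evaluating $\kappa_x$ and $\kappa_y$ on all of $I$ gives two expressions for the same number $\int xy\,d\mu$: from the form of $\lambda_x$ I get $\int xy\,d\mu=c_+\max C_X-c_-\min C_X\geq 0$, a sum of two nonnegative terms by the signs of $\max C_X$ and $\min C_X$, while from $\lambda_y$ I get $\int xy\,d\mu=d_+\min C_Y-d_-\max C_Y\leq 0$, a sum of two nonpositive terms. Therefore $\int xy\,d\mu=0$, and since each expression is a sum of terms of a single sign summing to zero, every term vanishes; as $\max C_X,\min C_X,\max C_Y,\min C_Y$ are all nonzero this forces $c_+=c_-=d_+=d_-=0$, i.e.\ $\lambda_x=\lambda_y=0$. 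I expect the main obstacle to be the step localizing $\lambda_x^\pm$ to the extreme points of $C_X$: getting the measure-theoretic reduction ``integral over every $A$ is nonnegative, hence the density is a.e.\ nonnegative'' together with the quantifier exchange over $x'\in C_X$ stated correctly. Once that structural form of $\lambda_x$ and $\lambda_y$ is in hand, the two sign computations of $\int xy\,d\mu$ finish the proof cleanly.
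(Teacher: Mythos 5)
Your proposal is correct, and the equivalences $(2)\Leftrightarrow(3)$ and $(3)\Rightarrow(1)$ are handled essentially as in the paper. The interesting divergence is in the main implication. The paper proves $(1)\Rightarrow(2)$ directly and very economically: from $\int_{A\times I}(xy-x'y)\,d\mu\ge 0$ and the availability of both positive and negative $x'$ it deduces $\kappa_x(A)\ge 0$ for \emph{every} $A$; combining the two players' constraints at $A=I$ (again using sign-flexibility of $x'$ and $y'$) pins $\kappa_x(I)=\int xy\,d\mu$ between a nonnegative and a nonpositive quantity, so $\kappa_x(I)=0$; finite additivity of the nonnegative set function $\kappa_x$ then forces $\kappa_x\equiv 0$. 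You instead prove $(1)\Rightarrow(3)$ by a structural route: reading the incentive constraint as nonnegativity of the signed measure $(x-x')\,d\lambda_x$ for every $x'\in C_X$, you localize $\lambda_x^{+}$ at $\max C_X$ and $\lambda_x^{-}$ at $\min C_X$ via the Hahn decomposition (note you can take $x'=\max C_X$ directly, since $C_X$ is compact, so the countable-dense-subset step is unnecessary), reduce $\lambda_x$ and $\lambda_y$ to two-point measures, and then kill the four coefficients by evaluating $\int xy\,d\mu$ two ways as a sum of same-signed terms. Both arguments ultimately rest on the same two ingredients --- the sign-flexibility of the deviations $x',y'$ guaranteed by Assumption~\ref{assump:mp} and the opposing inequalities from the two players that force $\int xy\,d\mu=0$ --- but the paper's version avoids the Hahn decomposition entirely by working with $\kappa_x$ (which the constraints make nonnegative outright) rather than with $\lambda_x$, while yours yields the additional structural fact that any failure of condition $(3)$ could only be carried by point masses at the extreme points of the strategy sets. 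Your final step is also where the hypothesis that $C_X$ and $C_Y$ each contain a positive and a negative element enters, exactly as it does in the paper's proof.
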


\begin{proof}
(\ref{item:mpcorreq} $\Rightarrow$~\ref{item:mpxyint}) We will consider only $\kappa_x$; $\kappa_y$ is similar.  The conditions of Proposition~\ref{prop:correqchar1} with $A = I$ imply that
\begin{equation*}
x'\int_{I\times I}y\,d\mu(x,y) \leq \int_{I\times I}xy\,d\mu(x,y)\leq y'\int_{I\times I}x\,d\mu(x,y)
\end{equation*}
for all $x'\in C_X, y'\in C_Y$.  By assumption it is possible to choose $x'$ and $y'$ either positive or negative, so $\int_{I\times I}xy\,d\mu(x,y) = 0$.  A similar argument with any $A$ implies that $\int_{A\times I}xy\,d\mu(x,y)\geq 0$.  Therefore we have
\begin{equation*}
0 = \int_{I\times I}xy\,d\mu(x,y) = \int_{A\times I}xy\,d\mu(x,y) + \int_{(I\setminus A)\times I}xy\,d\mu(x,y)\geq 0 + 0 = 0
\end{equation*}
for all $A$, so the inequality must be tight and we get $\int_{A\times I} xy\,d\mu(x,y) = 0$ for all $A$.

(\ref{item:mpxyint} $\Leftrightarrow$~\ref{item:mpxint}) By definition $d\kappa_x = x\,d\lambda_x$ and by assumption $\lambda_x(0) = 0$.  If one of these measures is zero then so is the other, and respectively with $y$ in place of $x$.

(\ref{item:mpxyint} \& \ref{item:mpxint} $\Rightarrow$~\ref{item:mpcorreq}) The integrals in Proposition~\ref{prop:correqchar1} vanish.
%
\end{proof}

\begin{proposition}
\label{prop:mpextfinsuppcorr}
Fix a game satisfying Assumption~\ref{assump:mp}.  Let $k > 0$ be even and $x_1,\ldots, x_{2k}$ and $y_1,\ldots,y_{2k}$ be such that:
\begin{enumerate}
\item $x_i\in C_X$ and $y_i\in C_Y$ are all nonzero;
\item the sequence $x_1,x_3,\ldots, x_{2k-1}$ has distinct elements and alternates in sign;
\item the sequence $y_1,y_3,\ldots,y_{2k-1}$ has distinct elements and alternates in sign;
\item $x_{2i} = x_{2i-1}$ and $y_{2i} = y_{2i+1}$ for all $i$ when subscripts are interpreted $\mod 2k$.
\end{enumerate}
Then $\mu = \sum_{i=1}^{2k} \frac{1}{\lvert x_i y_i\rvert}\delta_{(x_i,y_i)}$ is an extreme correlated equilibrium.
\end{proposition}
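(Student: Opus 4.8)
The plan is to verify the two requirements in turn: first that $\mu$ is a correlated equilibrium via the moment characterization of Proposition~\ref{prop:mpcorreqchar}, and then that it is an extreme ray of the cone of homogeneous correlated equilibria. Before starting I would record the combinatorial structure hidden in conditions (1)--(4). Because the odd-indexed $x_i$ are distinct and $x_{2i}=x_{2i-1}$, each value of $x$ occurring in the support is attained by exactly one pair of indices $\{2m-1,2m\}$; likewise, since $y_{2i}=y_{2i+1}$, each occurring value of $y$ is attained by exactly the pair $\{2m,2m+1\}$ (indices mod $2k$). Hence the $2k$ points $(x_i,y_i)$ are distinct, and they are threaded by a single cycle $1,2,\dots,2k,1$ whose edges alternate between ``shares an $x$-coordinate'' and ``shares a $y$-coordinate''. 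Since every $x_iy_i\neq 0$, Proposition~\ref{prop:mpcorreqchar} applies to $\mu$ and to any positive measure supported on these points.

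For the equilibrium property I would use condition~(\ref{item:mpxyint}). The measure $\kappa_x$ is carried by the finitely many distinct values of $x$, so it suffices to check that it vanishes on each atom. At the value shared by the pair $\{2m-1,2m\}$ the atom of $\kappa_x$ equals $\sign(x_{2m-1}y_{2m-1})+\sign(x_{2m}y_{2m})$, because the weight $1/\lvert x_iy_i\rvert$ is chosen precisely to turn each $x_iy_i$ into its sign. Since $x_{2m-1}=x_{2m}$ while $y_{2m-1}$ and $y_{2m}=y_{2m+1}$ are consecutive terms of the alternating sequence $y_1,y_3,\dots$, these two signs are opposite and the atom is zero; the same computation with the roles of $x$ and $y$ exchanged shows $\kappa_y$ vanishes. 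The one place requiring care is the wrap-around $m=k$, where the cancellation needs $y_{2k-1}$ and $y_{2k}=y_1$ (and likewise $x_{2k-1}$ and $x_{2k+1}=x_1$) to have opposite signs; this is exactly where the hypothesis that $k$ is even is used, since the $k$-term alternating sequence then closes up consistently. Thus both $\kappa_x$ and $\kappa_y$ are the zero measure and $\mu$ is a correlated equilibrium.

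For extremeness, suppose $\mu=\mu_1+\mu_2$ with $\mu_1,\mu_2$ homogeneous correlated equilibria. Positivity forces each $\mu_j$ to be supported within these $2k$ points, so I can write $\mu_1=\sum_{i=1}^{2k} t_i\,\frac{1}{\lvert x_iy_i\rvert}\,\delta_{(x_i,y_i)}$ with $t_i\in[0,1]$. Applying Proposition~\ref{prop:mpcorreqchar} to $\mu_1$ (legitimate since $xy\neq 0$ on its support), the vanishing of its $\kappa_x$ at the atom shared by $\{2m-1,2m\}$ now reads $(t_{2m-1}-t_{2m})\sign(x_{2m-1}y_{2m-1})=0$, forcing $t_{2m-1}=t_{2m}$, while the vanishing of $\kappa_y$ forces $t_{2m}=t_{2m+1}$. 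Travelling around the single cycle identified above, all the $t_i$ are therefore equal to a common value $t$, so $\mu_1=t\mu$ is a scalar multiple of $\mu$. This is exactly the statement that $\mu$ is an extreme ray.

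The hard part, and the conceptual point I would emphasize, is this last step: in an ordinary transportation-type problem a cycle in the support is precisely what \emph{destroys} extremeness, since one can push mass $+\varepsilon,-\varepsilon$ alternately around it while preserving the marginals. Here the constraints $\kappa_x=\kappa_y=0$ are homogeneous and \emph{signed}, and the alternation of signs built into hypotheses (2)--(4) converts each shared-coordinate relation into an equality of weights rather than a free direction, so the cycle produces rigidity instead of a degree of freedom. Getting the sign bookkeeping around the cycle correct---and seeing that $k$ even is exactly the parity that makes it close up---is the crux of the argument.
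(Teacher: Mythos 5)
Your proposal is correct and follows essentially the same route as the paper's proof: verify $\kappa_x=\kappa_y=0$ by pairing the two atoms sharing each $x$-value (resp.\ $y$-value) and using the sign alternation, then show that for any summand in a decomposition the same vanishing conditions give linear relations that propagate around the single cycle and force all weights to be proportional to those of $\mu$. Your explicit remarks on why $k$ must be even for the wrap-around cancellation and on why the cycle produces rigidity rather than a free direction are accurate refinements of points the paper leaves implicit, but the argument is the same.
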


\begin{proof}
To show that $\mu$ is a correlated equilibrium define $d\kappa(x,y) = xy\,d\mu(x,y)$.  Then $\kappa = \sum_{i=1}^{2k} \sign(x_i)\sign(y_i)\delta_{(x_i,y_i)}$.  Defining the projection $\kappa_x$ as in Proposition~\ref{prop:mpcorreqchar}, we have
\begin{equation*}
\begin{split}
\kappa_x & = \sum_{i=1}^{2k} \sign(x_i)\sign(y_i)\delta_{x_i} = \sum_{i=1}^k \sign(x_{2i})\left(\sign(y_{2i}) + \sign(y_{2i-1})\right)\delta_{x_{2i}} \\ &= \sum_{i=1}^k\sign(x_{2i})(0)\delta_{x_{2i}} = 0,
\end{split}
\end{equation*}
because $x_{2i} = x_{2i-1}$ and $y_{2i}$ differs in sign from $y_{2i-1}$ by assumption.  The same argument shows that $\kappa_y = 0$, so $\mu$ is a correlated equilibrium.

To see that $\mu$ is extreme, suppose $\mu = \mu' + \mu''$ where $\mu'$ and $\mu''$ are correlated equilibria.  Clearly $\mu' = \sum_{i=1}^{2k}\alpha_i\delta_{(x_i,y_i)}$ for some $\alpha_i\geq 0$.  Define $d\kappa' = xy\,d\mu'(x,y)$, so $\kappa' = \sum_{i=1}^{2k}\alpha_ix_iy_i\delta_{(x_i,y_i)}$.  By assumption
\begin{equation*}
\kappa'_x = \sum_{i=1}^{k}x_{2i}\left(\alpha_{2i-1}y_{2i-1} + \alpha_{2i}y_{2i}\right)\delta_{x_{2i}}
\end{equation*}
is the zero measure.  Since the $x_{2i}$ are distinct and nonzero we must have $\alpha_{2i-1}y_{2i-1} + \alpha_{2i}y_{2i} = 0$ for all $i$.  Similarly since $\kappa'_y = 0$ we have $\alpha_{2i+1}x_{2i+1} + \alpha_{2i}x_{2i} = 0$ for all $i$ (with subscripts interpreted $\mod 2k$).

The $x_i$ and $y_i$ are all nonzero, so fixing one $\alpha_i$ fixes all the others by these equations.  That is to say, these equations have a unique solution up to multiplication by a scalar, so $\mu'$ is a positive scalar multiple of $\mu$.  But the splitting $\mu = \mu' + \mu''$ was arbitrary, so $\mu$ is extreme.
\end{proof}

An argument along the lines of the proof of Proposition~\ref{prop:mpextfinsuppcorr} shows that any finitely supported correlated equilibrium $\mu$ whose support does not contain any points with $x=0$ or $y=0$ can be written as $\mu = \mu' + \mu''$ where $\mu'\neq 0$ is a correlated equilibrium and $\mu''\neq 0$ is a correlated equilibrium of the form studied in Proposition~\ref{prop:mpextfinsuppcorr}.  Therefore a finitely supported $\mu$ cannot be extreme unless it is of this form.

\usesavedcounter{exfin}
\begin{example}[cont'd]
For some examples of the supports of extreme correlated equilibria of games of this type, see Figures~\ref{fig:exfin1} and~\ref{fig:exfin2}.

\begin{figure}
\begin{center}
 \includegraphics[width = 0.4\textwidth]{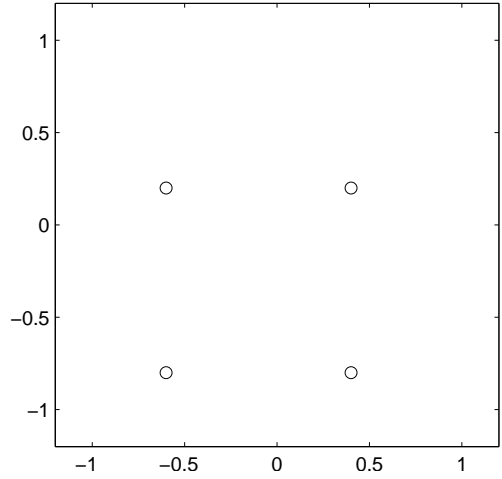}
\caption{The support of an extreme correlated equilibrium.  In the notation of Proposition~\ref{prop:mpextfinsuppcorr}, $k=2$, $x_1 = 0.4$, $x_3 = -0.6$, $y_1 = 0.2$, and $y_3 = -0.8$.}
\label{fig:exfin1}
\end{center}
\end{figure}

\begin{figure}
\begin{center}
 \includegraphics[width=0.4\textwidth]{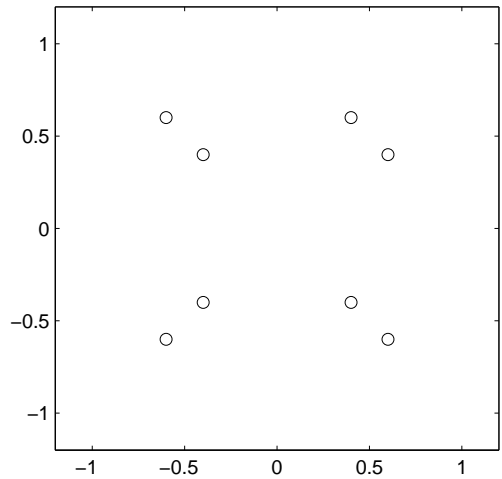}
\caption{The support of another extreme correlated equilibrium.  In the notation of Proposition~\ref{prop:mpextfinsuppcorr}, $k=4$, $x_1 = 0.4$, $x_3 = -0.4$, $x_5 = 0.6$, $x_7 = -0.6$, $y_1 = 0.6$, $y_3 = -0.4$, $y_5 = 0.4$, and $y_7 = -0.6$.}
\label{fig:exfin2}
\end{center}
\end{figure}

To count the number of extreme correlated equilibria of this game we must count the number of essentially different sequences of $x_i$ and $y_i$ of the type mentioned in Proposition~\ref{prop:mpextfinsuppcorr}.  Fix $k$ and let $k = 2r$ where $1\leq r \leq n$.  Note that cyclically shifting the sequences of $x_i$'s and $y_i$'s by two does not change $\mu$, nor does reversing the sequence.  Therefore we can assume without loss of generality that $x_1,y_1 > 0$.  We then have $n$ possible choices for $x_1,y_1,x_3,\text{ and }y_3$, $n-1$ possible choices for $x_5,x_7,y_5,\text{ and }y_7$, etc., for a total of $\left(\frac{n!}{(n-r)!}\right)^4$ possible choices of the $x_i$ and $y_i$.  These will always be essentially different (i.e., give rise to different $\mu$) unless we cyclically permute the sequences of $x_i$ and $y_i$ by some multiple of four, in which case the resulting sequence is essentially the same.  The number of such cyclic permutations is $r$.  Therefore the total number of extreme correlated equilibria is
\begin{equation*}
e(n) = \sum_{r = 1}^n \frac{1}{r}\left(\frac{n!}{(n-r)!}\right)^4.
\end{equation*}

We will see that $e(n) = \Theta\left(\frac{1}{n}(n!)^4\right)$.  That is to say, $e(n)$ is asymptotically upper and lower bounded by a constant times $\frac{1}{n}(n!)^4$.  The expression $\frac{1}{n}(n!)^4$ is just the final term in the summation for $e(n)$, so the lower bound is clear.  Define
\begin{equation*}
f(n) = \frac{e(n)}{\frac{1}{n}(n!)^4} = \sum_{s = 0}^{n-1} \frac{n}{n-s}\cdot\frac{1}{(s!)^4}.
\end{equation*}
Then $f(n)\geq 1$ for all $n$.  We will now show that $f(n)$ is also bounded above.  Intuitively this is not surprising since the terms in the summation for $f(n)$ die off extremely quickly as $s$ grows.

For all $1\leq s < n - 1$ we have that the ratio of term $s+1$ in the summation to term $s$ is:
\begin{equation*}
\frac{\frac{n}{n-s-1}\cdot\frac{1}{((s+1)!)^4}}{\frac{n}{n-s}\cdot\frac{1}{(s!)^4}} = \frac{n-s}{n-s-1}\cdot\frac{1}{(s+1)^4} \leq \frac{1}{8},
\end{equation*}
so for $n > 1$ we can bound the sum by a geometric series:
\begin{equation*}
f(n) - 1 = \sum_{s=1}^{n-1} \frac{n}{n-s}\cdot\frac{1}{(s!)^4} \leq \frac{n}{n-1}\sum_{t=0}^{\infty}\frac{1}{8^t} = \frac{8n}{7(n-1)}\leq \frac{16}{7}.
\end{equation*}

Therefore $1\leq f(n)\leq\frac{23}{7}$ for all $n$, so $e(n) = \Theta\left(\frac{1}{n}(n!)^4\right)$ as claimed.  Comparing this to the results of the previous section in which we saw that the number of extreme Nash equilibria of this game is $n^4$, we see that in this case there is a super-exponential separation between the number of extreme Nash and the number of extreme correlated equilibria.  This implies, for example, that computing all extreme correlated equilibria is not an efficient method for computing all extreme Nash equilibria, even though all extreme Nash equilibria are extreme correlated equilibria and recognizing whether an extreme correlated equilibrium is an extreme Nash equilibrium is easy.  There are simply too many extreme correlated equilibria.
\end{example}
\restorecounter

Next we will prove a more abstract version of Proposition~\ref{prop:mpextfinsuppcorr} which includes certain extreme points which are not finitely supported.  Before doing so we need a brief digression to ergodic theory.  The first definition is the standard definition of compatibility between a measure and a transformation on a space.  The second definition expresses one notion of what it means for a transformation to ``mix up'' a space -- in this case that the space cannot be partitioned into two sets of positive measure which do not interact under the transformation.  Then we state the main ergodic theorem and a corollary which we will apply to exhibit extreme correlated equilibria of games satisfying Assumption~\ref{assump:mp}.

\begin{definition}
Given a measure $\mu\in\Delta^*(S)$ on a space $S$, a measurable function $g: S\rightarrow S$ is called \textbf{($\mu$-)measure preserving} if $\mu(g^{-1}(A)) = \mu(A)$ for all measurable $A\subseteq S$.  Note that if $g$ is invertible (in the measure theoretic sense that an almost everywhere inverse exists), then this is equivalent to the condition that $\mu(g(A)) = \mu(A)$ for all $A$.
\end{definition}

\begin{definition}
Given a measure $\mu\in\Delta^*(S)$, a $\mu$-measure preserving transformation $g$ is called \textbf{ergodic} if $\mu(A\bigtriangleup g^{-1}(A))=0$ implies $\mu(A) = 0$ or $\mu(A) = \mu(S)$, where $A\bigtriangleup B$ denotes the symmetric difference $(A\setminus B) \cup (B\setminus A)$.
\end{definition}

\begin{example}
Fix a finite set $S$ and a function $g: S\rightarrow S$.  Let $\mu$ be counting measure on $S$.  Then $g$ is measure preserving if and only if it is a permutation.  In this case a set $T$ satisfies $\mu(g^{-1}(T)\bigtriangleup T) = 0$ if and only if $g^{-1}(T) = T$ if and only if $T$ is a union of cycles of $g$.  Therefore $g$ is ergodic if and only if it consists of a single cycle.
\end{example}

\begin{example}
\label{ex:ergrot}
Fix $\alpha\in\Rm$.  Let $S = [0,1)$ and let $\mu$ be Lebesgue measure on $S$.  Define $g: S\rightarrow S$ by $g(x) = (x+\alpha)\mod 1 = (x+\alpha)-\lfloor x+\alpha\rfloor$.  Then $g$ is $\mu$-measure preserving because Lebesgue measure is translation invariant.  It can be shown that $g$ is ergodic if and only if $\alpha$ is irrational.  For a proof and more examples, see \cite{silva:iet}.
\end{example}

The following is one of the core theorems of ergodic theory.  We will only use it to prove the corollary which follows, so it need not be read in detail.  The proof can be found in any text on ergodic theory, e.g.\ \cite{silva:iet}.

\begin{theorem}[Birkhoff's ergodic theorem]
Fix a probability measure $\mu$ and a $\mu$-measure preserving transformation $g$.  Then for any $f\in \mathcal{L}^1(\mu)$:
\begin{itemize}
\item $\tilde{f}(x) = \lim_{n\rightarrow\infty} \frac{1}{n}\sum_{k=0}^{n-1}f(g^k(x))$ exists $\mu$-almost everywhere,
\item $\tilde{f}\in\mathcal{L}^1(\mu)$,
\item $\int \tilde{f}\,d\mu = \int f\,d\mu$,
\item $\tilde{f}(g(x)) = \tilde{f}(x)$ $\mu$-almost everywhere, and
\item if $g$ is ergodic then $\tilde{f}(x) = \int f\,d\mu$ $\mu$-almost everywhere.
\end{itemize}
\end{theorem}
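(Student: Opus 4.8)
The plan is to derive all five assertions from a single nontrivial ingredient, the \emph{maximal ergodic theorem}, and otherwise to use only measure-preservation and standard convergence theorems. Write $S_n f = \sum_{k=0}^{n-1} f\circ g^k$ for the ergodic sums and $A_n f = \frac{1}{n}S_n f$ for the averages, so that $\tilde f = \lim_n A_n f$ is exactly what must be produced. First I would establish the maximal inequality: with $M_N f = \max_{0\le n\le N} S_n f$ and the convention $S_0 f = 0$ (so $M_N f \ge 0$), one has $\int_{\{M_N f > 0\}} f\,d\mu \ge 0$. I would prove this by Garsia's telescoping argument: from $S_m f(x) = f(x) + S_{m-1} f(g(x)) \le f(x) + M_N f(g(x))$ for $1\le m\le N+1$ one obtains $f \ge M_N f - M_N f\circ g$ on the set $\{M_N f > 0\}$, and integrating there---using $M_N f\ge 0$, the fact that $M_N f$ vanishes off this set, and $\int M_N f\circ g\,d\mu = \int M_N f\,d\mu$---collapses the right-hand side to zero.

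Next I would convert the maximal inequality into almost-everywhere convergence. Set $\overline f = \limsup_n A_n f$ and $\underline f = \liminf_n A_n f$; the identity $A_n f\circ g = \frac{n+1}{n}A_{n+1} f - \frac{1}{n} f$ shows both are $g$-invariant, which will also give the fourth bullet once convergence is in hand. For each pair of rationals $\alpha<\beta$ form the $g$-invariant set $E_{\alpha,\beta} = \{\underline f < \alpha < \beta < \overline f\}$ and apply the maximal inequality to $f-\beta$ and to $\alpha-f$, each restricted to $E_{\alpha,\beta}$ (invariance lets me treat $E_{\alpha,\beta}$ as the ambient space). This sandwiches $\int_{E_{\alpha,\beta}} f\,d\mu$ between $\beta\,\mu(E_{\alpha,\beta})$ and $\alpha\,\mu(E_{\alpha,\beta})$, and since $\alpha<\beta$ it forces $\mu(E_{\alpha,\beta})=0$. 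Taking the countable union over all rational pairs yields $\overline f = \underline f$ almost everywhere, so $\tilde f$ exists a.e.

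The remaining bullets then follow with little extra work. Since $g$ preserves $\mu$ we have $\int A_n f\,d\mu = \int f\,d\mu$ for every $n$, and Fatou's lemma applied to $|A_n f|$ together with measure-preservation gives both $\tilde f\in\mathcal L^1(\mu)$ and the bound $\int|\tilde\phi|\,d\mu\le\int|\phi|\,d\mu$ for the limit $\tilde\phi$ of the averages of any $\phi\in\mathcal L^1(\mu)$. To obtain $\int\tilde f\,d\mu = \int f\,d\mu$ I would first treat bounded $f$ by the bounded convergence theorem, then approximate a general $f$ by a bounded $h$ with $\|f-h\|_1$ small and use the displayed bound on $\widetilde{f-h}$ to make the error arbitrarily small. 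Finally, in the ergodic case the $g$-invariance from the second step makes $\tilde f$ constant almost everywhere---any invariant function is a.e.\ constant under ergodicity---and that constant equals $\int\tilde f\,d\mu = \int f\,d\mu$, which is the fifth bullet.

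I expect the conceptual heart, and the main obstacle, to be the maximal ergodic theorem and its deployment in the second step. The telescoping inequality $f\ge M_N f - M_N f\circ g$ is short once one thinks to introduce the running maxima $M_N f$, but recognizing this as the right quantity to control, and exploiting the $g$-invariance of $E_{\alpha,\beta}$ so that the maximal inequality can be applied on a proper invariant subset, is where the real idea lies. By comparison the integrability and integral-preservation claims are routine consequences of Fatou's lemma, measure-preservation, and $L^1$ approximation by bounded functions.
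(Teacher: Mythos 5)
The paper does not prove this theorem at all: it is quoted as a standard result, with the proof explicitly deferred to a text on ergodic theory (the reference \cite{silva:iet}), since only the corollary on absolutely continuous invariant measures is actually used downstream. So there is nothing in the paper to compare your argument against line by line. That said, your proposal is the standard and correct proof, essentially the one found in the cited sources: Garsia's telescoping derivation of the maximal ergodic theorem, the $g$-invariance of $\limsup_n A_n f$ and $\liminf_n A_n f$ via the identity $A_n f\circ g = \frac{n+1}{n}A_{n+1}f - \frac{1}{n}f$, the sandwich $\beta\,\mu(E_{\alpha,\beta})\le\int_{E_{\alpha,\beta}}f\,d\mu\le\alpha\,\mu(E_{\alpha,\beta})$ over rational pairs to kill the oscillation set, Fatou for integrability, and bounded approximation in $\mathcal{L}^1$ for $\int\tilde f\,d\mu=\int f\,d\mu$. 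Two small points to tidy when writing it out: the range $1\le m\le N$ suffices (and is what is needed) in the telescoping step, since $M_Nf=\max_{1\le m\le N}S_mf$ on $\{M_Nf>0\}$; and passing from $\int_{\{M_N(f-\beta)>0\}}(f-\beta)\,d\mu\ge 0$ to $\int_{E_{\alpha,\beta}}(f-\beta)\,d\mu\ge 0$ uses that these sets increase to $E_{\alpha,\beta}$ together with dominated convergence, and one should note that $\limsup$ and $\liminf$ are a priori extended-real-valued until the Fatou bound forces finiteness a.e. Neither issue is a gap in the idea.
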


\begin{corollary}
\label{cor:uniquemeasure}
Suppose $\mu$ and $\nu$ are probability measures such that $\nu$ is absolutely continuous with respect to $\mu$.  If a transformation $g$ preserves both $\mu$ and $\nu$ and $g$ is ergodic with respect to $\mu$, then $\nu = \mu$.
\end{corollary}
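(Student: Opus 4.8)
The plan is to prove the stronger statement that $\int \phi \, d\nu = \int \phi \, d\mu$ for every bounded measurable function $\phi$; specializing to indicator functions $\phi = \mathbf{1}_A$ then gives $\nu(A) = \mu(A)$ for all measurable $A$, which is exactly $\nu = \mu$. The entire argument is a single application of Birkhoff's ergodic theorem with respect to $\mu$, with absolute continuity used only to transport an almost-everywhere statement from $\mu$ to $\nu$.

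First I would fix a bounded measurable $\phi$ and form the time-averages $A_n = \frac{1}{n}\sum_{k=0}^{n-1}\phi\circ g^k$. Since $\mu$ is a probability measure, the bounded function $\phi$ lies in $\mathcal{L}^1(\mu)$, so Birkhoff's theorem applies: the $A_n$ converge $\mu$-almost everywhere to some $\tilde\phi$, and because $g$ is ergodic with respect to $\mu$ the limit is the constant $\int\phi\,d\mu$. This is where absolute continuity enters: the exceptional set on which convergence fails is $\mu$-null, hence $\nu$-null since $\nu\ll\mu$, so $A_n\to\int\phi\,d\mu$ holds $\nu$-almost everywhere as well.

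Next I would integrate the averages against $\nu$ in two different ways and compare. On one hand, each $A_n$ is bounded by $\|\phi\|_\infty$, so the bounded convergence theorem (valid since $\nu$ is finite) gives $\int A_n\,d\nu\to\int\phi\,d\mu$, the pointwise limit being the constant $\int\phi\,d\mu$ together with $\nu(S)=1$. On the other hand, because $g$ preserves $\nu$ we have $\int \phi\circ g^k\,d\nu = \int\phi\,d\nu$ for every $k$, and hence $\int A_n\,d\nu = \int\phi\,d\nu$ for every $n$. Comparing the two evaluations yields $\int\phi\,d\nu = \int\phi\,d\mu$, and testing against indicators finishes the proof.

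The only delicate point is the interchange of limit and integral in the step $\int A_n\,d\nu\to\int\phi\,d\mu$: the Birkhoff averages of a merely $\mathcal{L}^1$ function need not be dominated, so one cannot integrate an arbitrary density $d\nu/d\mu$ directly against $\nu$. Restricting to bounded $\phi$ circumvents this, since then the averages are uniformly bounded and the bounded convergence theorem applies cleanly, while testing against all bounded $\phi$ (indeed all indicators) still suffices to pin down $\nu$. It is worth noting that $g$ need not be invertible here: when $g$ is invertible one could instead argue that the Radon--Nikodym derivative $d\nu/d\mu$ is $g$-invariant and therefore $\mu$-a.e.\ constant by ergodicity, but the averaging argument above avoids any such hypothesis.
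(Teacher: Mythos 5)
Your proof is correct and follows essentially the same route as the paper: apply Birkhoff's theorem with respect to $\mu$, use ergodicity to identify the limit of the averages as the constant $\int \phi\,d\mu$, transfer the almost-everywhere convergence to $\nu$ via absolute continuity, and use the $g$-invariance of $\nu$ to conclude $\int\phi\,d\nu = \int\phi\,d\mu$. The only cosmetic difference is that the paper works directly with indicator functions and obtains $\int\tilde f\,d\nu = \int f\,d\nu$ by a second appeal to Birkhoff's theorem for $\nu$, whereas you derive the same equality from bounded convergence together with $\int\phi\circ g^k\,d\nu = \int\phi\,d\nu$.
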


\begin{proof}Fix any measurable set $A$.  Let $f$ be the indicator function for $A$, i.e.\ the function equal to unity on $A$ and zero elsewhere.  Applying Birkhoff's ergodic theorem to $f$ and $\mu$ yields $\tilde{f}(x) = \mu(A)$ $\mu$-almost everywhere.  Since $\nu$ is absolutely continuous with respect to $\mu$, $\tilde{f}(x) = \mu(A)$ $\nu$-almost everywhere also.  If we now apply Birkhoff's ergodic theorem to $\nu$ we get:
\begin{equation*}
\nu(A) = \int f \,d\nu = \int \tilde{f}\,d\nu = \int \mu(A)\,d\nu = \mu(A).\qedhere
\end{equation*}
\end{proof}



We now construct a family of extreme correlated equilibria.

\begin{proposition}
\label{prop:mpextcorr}
Fix measures $\nu_1, \nu_2, \nu_3$, and $\nu_4\in\Delta^*((0,1])$ and maps $f_i: (0,1]\rightarrow (0,1]$ such that $\nu_{i+1} = \nu_i\circ f_i^{-1}$ (interpreting subscripts $\mod 4$).  The portion of the measure $\mu$ in the $i^{\text{th}}$ quadrant of $I\times I$ will be constructed in terms of $f_i$ and $\nu_i$.  Define $j_i:(0,1]\rightarrow I\times I$ by $j_1(x) = (x,f_1(x))$, $j_2(x) = (-f_2(x),x)$, $j_3(x) = (-x,-f_3(x))$, and $j_4(x) = (f_4(x),-x)$.  Let $\abskap = \sum_{i=1}^4 \nu_i\circ j_i^{-1}$.  If Assumption~\ref{assump:mp} is satisfied, $\supp\abskap\subseteq C_X\times C_Y$, and $\frac{1}{\lvert xy\rvert}\in\mathcal{L}^1(\abskap)$ then $d\mu = \frac{1}{\lvert xy\rvert}\,d\abskap$ is a correlated equilibrium.

By assumption 
$f_4\circ f_3\circ f_2\circ f_1:(0,1]\rightarrow (0,1]$ is $\nu_1$-measure preserving.  If it is also ergodic with respect to $\nu_1$, then $\mu$ is extreme.
\end{proposition}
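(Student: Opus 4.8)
The plan is to show that the homogeneous correlated equilibrium $\mu$ is an \emph{extreme ray} of the cone of correlated equilibria: whenever $\mu = \mu' + \mu''$ with $\mu',\mu''$ homogeneous correlated equilibria, I will prove that $\mu'$ (and hence $\mu''$) is a scalar multiple of $\mu$. Since $\mu',\mu''\geq 0$ we have $0\leq\mu'\leq\mu$, so $\mu'$ is absolutely continuous with respect to $\mu$ and is carried by the same four curves $j_i((0,1])$, $i=1,\ldots,4$, that support $\mu$. These curves lie in the four open quadrants $Q_i$ of $I\times I$ and each $j_i$ is injective, so $\mu'$ splits uniquely as $\mu' = \sum_{i=1}^4 \mu'|_{Q_i}$, and pulling each piece back through $j_i$ produces measures $\nu_i' := (j_i^{-1})_*\bigl(\lvert xy\rvert\,\mu'|_{Q_i}\bigr)$ on $(0,1]$ with $0\leq\nu_i'\leq\nu_i$.

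First I would translate the equilibrium condition on $\mu'$ into relations among the $\nu_i'$. By Proposition~\ref{prop:mpcorreqchar} (applicable because $xy\neq 0$ holds $\mu'$-a.e., inherited from $\mu$), $\mu'$ is a correlated equilibrium exactly when the signed measures obtained by projecting $d\kappa' = xy\,d\mu'$ onto the two axes both vanish. The key computation is to evaluate these projections quadrant by quadrant. On the positive $x$-axis only $Q_1$ and $Q_4$ contribute, with opposite signs coming from $\sign(xy)$, and carrying out the pushforwards through the first coordinates of $j_1$ and $j_4$ turns the vanishing of $\kappa'_x$ there into $\nu_1' = (f_4)_*\nu_4'$; the negative $x$-axis gives $\nu_3' = (f_2)_*\nu_2'$, and the two halves of the $\kappa'_y$ condition give $\nu_2' = (f_1)_*\nu_1'$ and $\nu_4' = (f_3)_*\nu_3'$. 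In other words $\mu'$ satisfies exactly the compatibility relations $\nu_{i+1}' = (f_i)_*\nu_i'$ that define $\mu$ itself.

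Composing the four relations around the cycle yields $\nu_1' = F_*\nu_1'$ with $F = f_4\circ f_3\circ f_2\circ f_1$; that is, $\nu_1'$ is an $F$-invariant measure absolutely continuous with respect to $\nu_1$. Now ergodicity enters. After disposing of the trivial case $\nu_1' = 0$ (which forces $\mu' = 0$) and normalizing $\nu_1'$ and $\nu_1$ to probability measures, Corollary~\ref{cor:uniquemeasure} applies with the ergodic, $\nu_1$-preserving map $F$ and shows that $\nu_1'$ is a constant multiple $c\nu_1$ of $\nu_1$. Feeding $c\nu_1$ back through the compatibility relations gives $\nu_i' = c\nu_i$ for every $i$, hence $\mu' = c\mu$, which is the desired extremality.

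I expect the main obstacle to be the bookkeeping in the second paragraph: correctly tracking $\sign(xy)$ on each quadrant and the way each $j_i$ sends the parameter either directly to a coordinate or through one of the $f_i$, so that the vanishing projections collapse to clean pushforward identities. Some care is also needed because the $f_i$ are not assumed invertible, so the relations must be read as pushforward identities on measures rather than as pointwise changes of variables; fortunately the cyclic composition together with $\nu_1'\leq\nu_1$ is all that Corollary~\ref{cor:uniquemeasure} requires, so non-invertibility causes no genuine difficulty.
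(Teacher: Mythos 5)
Your extremality argument is correct, and it runs on the same engine as the paper's proof: any component $\mu'$ of $\mu$ gives rise to a measure on $(0,1]$ that is invariant under $F = f_4\circ f_3\circ f_2\circ f_1$ and absolutely continuous with respect to $\nu_1$, and Corollary~\ref{cor:uniquemeasure} then forces it to be a constant multiple of $\nu_1$. The packaging differs slightly: the paper works upstairs on $I\times I$, constructing a map $g$ that sends each quadrant to the next, proving $g$ preserves $\lvert\kappa\rvert$ and each $\lvert\kappa_i\rvert$ and is ergodic for $\lvert\kappa\rvert$, and applying the corollary there; you push everything down to $(0,1]$ and apply the corollary directly to $F$. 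Your version skips the construction of $g$ and its ergodicity proof (which in the paper reduces to ergodicity of $F$ anyway, by intersecting an invariant set with the four quadrants), at the cost of verifying the four pushforward identities $\nu_{i+1}' = (f_i)_*\nu_i'$ explicitly; these are correct as you state them, signs included.

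Two repairs are needed. First, you never establish the proposition's first assertion, that $\mu$ is a correlated equilibrium (finiteness of $\mu$ via $\frac{1}{\lvert xy\rvert}\in\mathcal{L}^1(\lvert\kappa\rvert)$, plus $\kappa_x = \kappa_y = 0$). This does follow from your own quadrant computation specialized to $\mu' = \mu$, since the hypothesis $\nu_{i+1} = \nu_i\circ f_i^{-1}$ is exactly the system of relations you derive, but it has to be said. Second, since the $f_i$ are only assumed measurable, the images $j_i(A)$ need not be Borel sets, so $\nu_i'$ should not be defined by evaluating $\lvert\kappa_i'\rvert$ on $j_i(A)$; instead define $\nu_i'$ as the pushforward of $\lvert xy\rvert\,d\mu'\vert_{Q_i}$ under the appropriate signed coordinate projection (e.g.\ $(x,y)\mapsto x$ on the first quadrant, $(x,y)\mapsto y$ on the second), which is a measurable left inverse of $j_i$. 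With that definition every identity you write goes through verbatim, and your closing remark that non-invertibility of the $f_i$ causes no difficulty is then fully justified.
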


\begin{proof}
First we must show that $\mu$ is a correlated equilibrium.  It is a finite measure by the assumption $\frac{1}{\lvert xy\rvert}\in\mathcal{L}^1(\abskap)$ and $xy\neq 0$ $\mu$-a.e.\ by definition.  Define $g: I\times I \rightarrow I\times I$ as follows.
\begin{equation*}
g(x,y) = \begin{cases}
j_1(x) & \text{if }x>0, y<0 \\
j_2(y) & \text{if }x>0, y>0 \\
j_3(-x) & \text{if }x<0,y>0 \\
j_4(-y) & \text{if }x<0,y<0 \\
\text{arbitrary} & \text{otherwise}
\end{cases}
\end{equation*}
The function $g$ is $\abskap$-measure preserving.  To see this fix any measurable set $B\subseteq (0,1]\times(0,1]$.  Let $A = j_1^{-1}(B)$.  Then $\abskap(B) = \abskap(A\times (0,1]) = \nu_1(A)$ by definition of $\abskap$.  But $g^{-1}(B) = g^{-1}(A\times(0,1]) = A\times [-1,0)$, so 
\begin{equation*}
\abskap(g^{-1}(B)) = \abskap(A\times [-1,0)) = \nu_4(j_4^{-1}(A\times [-1,0))) = \nu_4(f_4^{-1}(A)) = \nu_1(A) = \abskap(B).
\end{equation*}
Therefore $g$ is measure preserving for subsets of $(0,1]\times (0,1]$.  The arguments for the other quadrants are similar and since $g$ maps each quadrant into a different quadrant, $g$ is measure preserving on its entire domain.

Define the signed measure $\kappa$ by $d\kappa = xy \,d\mu = \sign(x)\sign(y)\,d\abskap$.  We have seen that $\abskap(A\times (0,1]) = \abskap(A\times [-1,0))$, so $\kappa(A\times (0,1]) = -\kappa(A\times [-1,0))$.  Since $\kappa(A\times \{0\}) = 0$, we have $\kappa(A\times I) = 0$, or using the terminology of Proposition~\ref{prop:mpcorreqchar}, $\kappa_x(A) = 0$.  A similar argument implies $\kappa_x(A) = 0$ if $A\subseteq [-1,0)$.  Clearly $\kappa_x(0) = 0$ by definition of $\kappa_x$, so $\kappa_x$ is the zero measure.  In the same way we can show that $\kappa_y$ is the zero measure, so $\mu$ is a correlated equilibrium by Proposition~\ref{prop:mpcorreqchar}.

Now we will show via several steps that $\mu$ is extreme.  Write $\mu = \mu_1 + \mu_2$ where the $\mu_i$ are nonzero correlated equilibria.  Since these are all positive measures, the $\mu_i$ are absolutely continuous with respect to $\mu$.  
Define $d\abskapi = \lvert xy\rvert \,d\mu_i$ and $d\kappa_i = xy\,d\mu_i$.

Next we show that $g$ is $\abskapi$-measure preserving.  We will demonstrate this fact for $B\subseteq (0,1]\times (0,1]$.  As above, we define $A = j_1^{-1}(B)$.  Then $\abskapi(B) = \abskapi(A\times (0,1])$ since $(A\times (0,1]) \bigtriangleup B$ has $\abskap$ measure zero and $\abskapi$ is absolutely continuous with respect to $\abskap$.  Furthermore, $\abskapi(g^{-1}(B)) = \abskapi(A\times [-1,0))$.  But $\mu_i$ is a correlated equilibrium so $\kappa_i(A\times (0,1]) = -\kappa_i(A\times [-1,0))$.  Hence $\abskapi(g^{-1}(B)) = \abskapi(A\times [-1,0)) = \abskapi(A\times (0,1]) = \abskapi(B)$.  Again, the proof is the same for $B$ contained in other quadrants, so $g$ is $\abskapi$-measure preserving.  

For the second-to-last step we prove that $g$ is ergodic with respect to $\abskap$.  Suppose $B\subseteq I\times I$ is such that $\abskap(g^{-1}(B)\bigtriangleup B) = 0$.  Let $Q_i$ be the intersection of $B$ with the $i^{\text{th}}$ quadrant.  Then $\abskap(g^{-1}(Q_{i+1})\bigtriangleup Q_i) = 0$, so $\abskap(g^{-4}(Q_1) \bigtriangleup Q_1) = 0$.  Let $A = j_1^{-1}(Q_1)$.  Then $\abskap(g^{-4}(Q_1)\bigtriangleup Q_1) = \nu_1((f_4\circ f_3 \circ f_2\circ f_1)^{-1}(A) \bigtriangleup A) = 0$.  By assumption the map $f_4\circ f_3\circ f_2\circ f_1$ is ergodic, so $\nu_1(A) = 0$ or $\nu_1(A) = \nu_1((0,1]) = \abskap((0,1]\times(0,1])$.  Therefore $\abskap(Q_1) = \nu_1(A) = 0$ or $\abskap(Q_1) = \abskap((0,1]\times(0,1])$.  In either case since $g$ is $\abskap$-measure preserving we get $\abskap(Q_i) = \abskap(Q_1)$ for all $i$.  Therefore $\abskap(B) = 0$ or $\abskap(B) = \abskap(I\times I)$, so $g$ is ergodic with respect to $\abskap$.

Normalizing $\abskap$ and $\abskapi$ to be probability measures, we can apply Corollary~\ref{cor:uniquemeasure} to obtain $\abskapi = \frac{\abskapi(I\times I)}{\abskap(I\times I)}\abskap$.  By definition the set on which $\lvert xy \rvert$ is zero has $\mu$ measure zero.  Therefore
\begin{equation*}
d\mu_i = \frac{1}{\lvert xy \rvert}\,d\abskapi =   \frac{\abskapi(I\times I)}{\abskap(I\times I)}\frac{1}{\lvert xy \rvert}\,d\abskap =  \frac{\abskapi(I\times I)}{\abskap(I\times I)}\,d\mu,
\end{equation*}
so $\mu_i =  \frac{\abskapi(I\times I)}{\abskap(I\times I)} \mu$ and $\mu$ is extreme.
\end{proof}

Above we have constructed $\mu$ and $g$ so that $g$ maps the quadrants counter-clockwise -- quadrant $1$ to quadrant $2$, etc.  However, the same argument would go through if $g$ mapped the quadrants clockwise.

To view Proposition~\ref{prop:mpextfinsuppcorr} as a special case of Proposition~\ref{prop:mpextcorr}, let each $\nu_i$ be a uniform probability measure over a finite subset of $(0,1]$.  The function $g$ is defined by $g(x_i,y_i) = (x_{i+1},y_{i+1})$ and the $f_i$ are defined to be compatible with this.  The map $f_4\circ f_3\circ f_2\circ f_1$ is a permutation on the support of $\nu_1$, which is precisely the positive values of $x_i$.  By construction this permutation consists of a single cycle, hence it is ergodic.

\usesavedcounter{exinf}
\begin{example}[cont'd]
We can combine Example~\ref{ex:ergrot} and Proposition~\ref{prop:mpextcorr} to exhibit extreme points of the set of correlated equilibria for this game which are not finitely supported.  Let $0 < a < b < 1$.  Let $\nu_i$ be Lebesgue measure on $[a,b)$ for all $i$.  Fix $\alpha$ such that $\frac{\alpha}{b-a}$ is irrational.  Define $f_1: [a,b)\rightarrow [a,b)$ by $f(x) = (x-a+\alpha\mod (b-a)) + a$.  This is just an affinely scaled version of Example~\ref{ex:ergrot} so $f_1$ is $\nu_i$-measure preserving and ergodic.  Define $f_1$ on $(0,1]\setminus [a,b)$ arbitrarily, because that is a set of measure zero.  Let $f_2,f_3,f_4: (0,1]\rightarrow (0,1]$ be the identity.  These data satisfy all the assumptions of Proposition~\ref{prop:mpextcorr}.  In particular, since $0<a<b<1$, $xy$ is bounded away from zero on the support of $\abskap$.  Therefore $\frac{1}{\lvert xy \rvert}\in\mathcal{L}^1(\abskap)$.  Since $\nu_i$ is not finitely supported, $\mu$ is an extreme correlated equilibrium which is not finitely supported.  The support of $\mu$ is shown in Figure~\ref{fig:exinf} with parameters $a = 0.2$, $b = 0.8$, and $\alpha = \frac{1}{\sqrt{5}}$.
\restorecounter

\begin{figure}
\begin{center}
 \includegraphics[width = 0.4\textwidth]{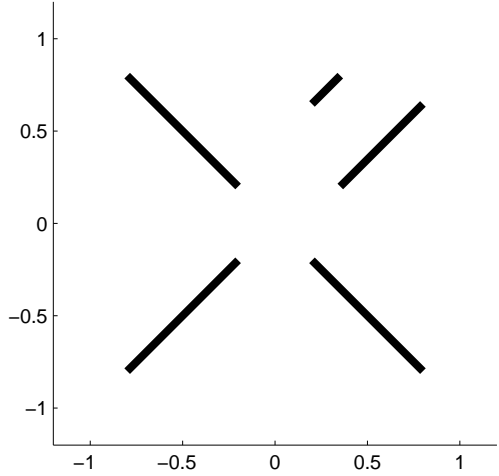}
\caption{The support set of an extreme correlated equilibrium which is not finitely supported.  Extremality of this equilibrium depends sensitively on the choices of endpoints for the line segments.  In this case there are segments connecting: $(0.2,-0.2)$ to $(0.8, -0.8)$; $(-0.2,-0.2)$ to $(-0.8,-0.8)$; $(-0.2,0.2)$ to $(-0.8,0.8)$; $\left(0.2,0.2+\frac{1}{\sqrt{5}}\right)$ to $\left(0.8-\frac{1}{\sqrt{5}},0.8\right)$; and $\left(0.8-\frac{1}{\sqrt{5}},0.2\right)$ to $\left(0.8,0.2+\frac{1}{\sqrt{5}}\right)$.}
\label{fig:exinf}
\end{center}
\end{figure}
\end{example}
\restorecounter

\begin{definition}Given a compact Hausdorff space $K$ we say that a set of measures $\mathcal{M}\subseteq\Delta^*(K)$ is \textbf{describable by moments} if there exists an integer $d$, bounded Borel measurable maps $g_1,\ldots,g_d: K\rightarrow\Rm$, and a set $M\subseteq\Rm^d$ such that a measure $\mu$ is in $\mathcal{M}$ if and only if $\left(\int g_1\,d\mu,\ldots,\int g_d\,d\mu\right)\in M$.
\end{definition}

The results of \cite{karlin:tig} show that the maximin and minimax strategy sets of a two-player zero-sum polynomial game can always be described by moments.  Introducing a similar notion for $n$-tuples of measures, the set of Nash equilibria can always be described by moments in any polynomial game \cite{sop:slrcg}.  However, combining this example with the following proposition we see that the set of correlated equilibria of a polynomial game cannot in general be described by moments.

This is important because the finite-dimensional representation in terms of moments is the primary tool for computing and characterizing Nash equilibria of polynomial games.  One is therefore naturally drawn to try to find such a representation for the set of correlated equilibria.  The example and this proposition show that no such representation exists in general.

\begin{proposition}
\label{prop:momentmeasure}
Let $\mathcal{M}\subseteq\Delta^*(K)$ be a set of measures describable by moments.  Then all extreme points of $\mathcal{M}$ have finite support and this support is uniformly bounded by $d$, where $d$ is the integer associated with the description of $\mathcal{M}$ by moments.
\end{proposition}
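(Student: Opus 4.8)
The plan is to prove the contrapositive: if $\mu\in\mathcal{M}$ is \emph{not} carried by at most $d$ atoms, then $\mu$ is not an extreme point of $\mathcal{M}$. The decisive feature of the definition is that membership in $\mathcal{M}$ depends on a measure only through its moment vector $\Phi(\mu):=\left(\int g_1\,d\mu,\ldots,\int g_d\,d\mu\right)$. Hence if $\mu\in\mathcal{M}$ and $h$ is a signed measure with $\int g_i\,dh=0$ for all $i$, then $\Phi(\mu\pm\epsilon h)=\Phi(\mu)\in M$, so $\mu\pm\epsilon h\in\mathcal{M}$ the moment we know it is a nonnegative finite Borel measure. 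The entire argument is therefore about manufacturing one such perturbation $h$.

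First I would show that if $\mu$ is not carried by at most $d$ atoms, then $K$ admits a partition into $d+1$ pairwise disjoint measurable sets $E_0,\ldots,E_d$ with $\mu(E_j)>0$ for each $j$. To see this, split $\mu$ into its atomic and non-atomic parts: if the non-atomic part is nonzero it can be subdivided into arbitrarily many pieces of positive measure (a finite non-atomic measure takes every value in $[0,\mu(K)]$ on measurable subsets), while if $\mu$ has more than $d$ atoms one simply isolates $d+1$ of them. I expect this to be the only genuinely delicate step, as it is where the reduction from ``support too large'' to ``$d+1$ disjoint positive pieces'' is made and where the standard fact about subdividing non-atomic measures is invoked.

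The remainder is linear algebra. Setting $a_{ij}=\int_{E_j} g_i\,d\mu$ (finite, since each $g_i$ is bounded and $\mu$ is finite), the homogeneous system $\sum_{j=0}^{d} c_j a_{ij}=0$, $i=1,\ldots,d$, has $d$ equations in $d+1$ unknowns and so possesses a nontrivial solution $(c_0,\ldots,c_d)$. I then form the signed measure $h=\sum_{j=0}^{d} c_j\,\mu\vert_{E_j}$, which is nonzero because some $c_{j}\neq 0$ and $\mu(E_{j})>0$, and which satisfies $\int g_i\,dh=\sum_j c_j a_{ij}=0$ for every $i$.

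Finally, choosing $\epsilon>0$ small enough that $1\pm\epsilon c_j>0$ for all $j$, the measure $\mu\pm\epsilon h$ equals $(1\pm\epsilon c_j)\,\mu\vert_{E_j}\ge 0$ on each $E_j$ and equals $\mu$ off $\bigcup_j E_j$, hence is a nonnegative finite measure; by the moment observation both lie in $\mathcal{M}$. Since $\mu=\tfrac12(\mu+\epsilon h)+\tfrac12(\mu-\epsilon h)$ and $\mu+\epsilon h\neq\mu-\epsilon h$, the point $\mu$ fails to be extreme. This contradiction shows every extreme point of $\mathcal{M}$ is carried by at most $d$ atoms, i.e.\ has finite support of size at most $d$.
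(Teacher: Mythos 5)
Your proof is correct and follows essentially the same route as the paper: partition the space into $d+1$ pieces of positive $\mu$-measure, observe that membership in $\mathcal{M}$ depends only on the $d$-dimensional moment vector, and exhibit a nontrivial reweighting of the pieces that preserves the moments, so $\mu$ is a proper convex combination of two distinct members of $\mathcal{M}$. The only cosmetic differences are that you produce the perturbation directly from a null vector of a $d\times(d+1)$ system where the paper invokes Carath\'{e}odory's theorem on the polytope of admissible coefficient vectors, and that you supply the atomic/non-atomic justification for the existence of the partition, which the paper asserts without proof.
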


\begin{proof}
Let $g_1,\ldots,g_d: K\to\Rm$ be the maps describing $\mathcal{M}$.  Suppose there exists a measure $\mu\in\mathcal{M}$ which is extreme and supported on more than $d$ points, so we can partition the domain of $\mu$ into $d+1$ sets $B_1,\ldots,B_{d+1}$ of positive measure.  For $c = (c_1,\ldots, c_{d+1})\in\Rm^{d+1}_{\geq 0}$, define $\mu_c = \sum_{i = 1}^{d+1} c_i \mu\vert_{B_i}$.  The map $c\mapsto \mu_c$ is injective.  Define
\begin{equation*}
K = \left\{c\in\Rm^{d+1}_{\geq 0} \,\bigg\vert\, \int g_i\,d\mu_c=\int g_i\,d\mu\text{ for }i =1,\ldots, d\right\},
\end{equation*}
so $(1,1,\ldots, 1)\in K$.  Linearity of integration implies that the nonempty set $K$ is the intersection of an affine space of dimension at least one with the positive orthant.  By Carath\'{e}odory's theorem (or equivalently the statement that a feasible linear program has a basic feasible solution), the extreme points of $K$ each have at most $d$ nonzero entries \cite{bno:convex}.  Thus $(1,1,\ldots, 1)$ is not an extreme point of $K$, so we can write $(1,1,\ldots, 1) = \lambda c + (1 - \lambda) c'$ for $0<\lambda < 1$ and $(1,1,\ldots, 1)\neq c,c'\in K$.  Therefore $\mu = \mu_{(1,1\ldots, 1)} = \lambda \mu_c + (1-\lambda)\mu_{c'}$ is not extreme.
\end{proof}

\section{Future work}
These results leave several open questions.  If we define a moment map to be any map of the form $\pi\mapsto \left(\int f_1\,d\pi,\int f_2\,d\pi,\ldots,\int f_k\,d\pi\right)$ for bounded Borel measurable $f_i$, then we have shown that the set of correlated equilibria is not the inverse image of any set under any moment map.  On the other hand, since moment maps are linear and weak* continuous, we know that the image of the set of correlated equilibria under any moment map is convex and compact.

Supposing the utilities and the $f_i$ are polynomials, is there anything more we can say about this image?  In particular, is it semialgebraic (i.e.,\ describable in terms of finitely many polynomial inequalities)?  If so, can we compute these inequalities or a solution thereof efficiently for given utilities?  A sequence of easily computed outer bounds to this image is presented in \cite{spo:cecgcc}.  Can we compute nonempty inner bounds?

\section*{Acknowledgements}
The first author would like to thank Prof. Cesar E. Silva for many discussions about ergodic theory, and in particular for the simple proof of Corollary~\ref{cor:uniquemeasure} using Birkhoff's ergodic theorem.

\bibliographystyle{plain}
\bibliography{../../references}

\begin{thebibliography}{10}

\bibitem{a:scrs}
R.~J. Aumann.
\newblock Subjectivity and correlation in randomized strategies.
\newblock {\em Journal of Mathematical Economics}, 1(1):67 -- 96, 1974.

\bibitem{bno:convex}
D.~P. Bertsekas, A.~Nedi\'{c}, and A.~E. Ozdaglar.
\newblock {\em Convex Analysis and Optimization}.
\newblock Athena Scientific, Belmont, MA, 2003.

\bibitem{c:ecnetpg}
M.~Cripps.
\newblock Extreme correlated and {N}ash equilibria in two-person games.
\newblock http://www.olin.wustl.edu/faculty/cripps/CES2.DVI, November 1995.

\bibitem{d:une}
R.~Datta.
\newblock Universality of {N}ash equilibrium.
\newblock {\em Mathematics of Operations Research}, 28(3):424 -- 432, August
  2003.

\bibitem{dk:scgfp}
M.~Dresher and S.~Karlin.
\newblock Solutions of convex games as fixed points.
\newblock In H.~W. Kuhn and A.~W. Tucker, editors, {\em Contributions to the
  Theory of Games II}, number~28 in Annals of Mathematics Studies, pages 75 --
  86. Princeton University Press, Princeton, NJ, 1953.

\bibitem{dks:pg}
M.~Dresher, S.~Karlin, and L.~S. Shapley.
\newblock Polynomial games.
\newblock In H.~W. Kuhn and A.~W. Tucker, editors, {\em Contributions to the
  Theory of Games I}, number~24 in Annals of Mathematics Studies, pages 161 --
  180. Princeton University Press, Princeton, NJ, 1950.

\bibitem{er:nce}
F.~S. Evangelista and T.~E.~S. Raghavan.
\newblock A note on correlated equilibrium.
\newblock {\em International Journal of Game Theory}, 25(1):35 -- 41, March
  1996.

\bibitem{ft:gt}
D.~Fudenberg and J.~Tirole.
\newblock {\em Game Theory}.
\newblock MIT Press, Cambridge, MA, 1991.

\bibitem{gl:essce}
F.~Germano and G.~Lugosi.
\newblock Existence of sparsely supported correlated equilibria.
\newblock {\em Economic Theory}, 32(3):575 -- 578, September 2007.

\bibitem{glicksberg:cg}
I.~L. Glicksberg.
\newblock A further generalization of the {K}akutani fixed point theorem, with
  application to {N}ash equilibrium points.
\newblock {\em Proceedings of the American Mathematical Society}, 3(1):170 --
  174, February 1952.

\bibitem{hs:ece}
S.~Hart and D.~Schmeidler.
\newblock Existence of correlated equilibria.
\newblock {\em Mathematics of Operations Research}, 14(1):18 -- 25, February
  1989.

\bibitem{karlin:tig}
S.~Karlin.
\newblock {\em Mathematical Methods and Theory in Games, Programming, and
  Economics}, volume 2: Theory of Infinite Games.
\newblock Addison-Wesley, Reading, MA, 1959.

\bibitem{ks:gms}
S.~Karlin and L.~S. Shapley.
\newblock {\em Geometry of Moment Spaces}.
\newblock American Mathematical Society, Providence, RI, 1953.

\bibitem{lmm:plgss}
R.~J. Lipton, E.~Markakis, and A.~Mehta.
\newblock Playing large games using simple strategies.
\newblock In {\em Proceedings of the 4th {A}{C}{M} Conference on Electronic
  Commerce}, pages 36 -- 41, New York, NY, 2003. {A}{C}{M} Press.

\bibitem{myerson:gt}
R.~B. Myerson.
\newblock {\em Game Theory: Analysis of Conflict}.
\newblock Harvard University Press, Cambridge, MA, 1991.

\bibitem{nash:ncg}
J.~F. Nash.
\newblock Non-cooperative games.
\newblock {\em Annals of Mathematics}, 54(2):286 -- 295, September 1951.

\bibitem{nch:ognece}
R.~Nau, S.~G. Canovas, and P.~Hansen.
\newblock On the geometry of {N}ash equilibria and correlated equilibria.
\newblock {\em International Journal of Game Theory}, 32:443 -- 453, 2003.

\bibitem{r:fa}
W.~Rudin.
\newblock {\em Functional Analysis}.
\newblock McGraw-Hill, New York, 1991.

\bibitem{silva:iet}
C.~E. Silva.
\newblock {\em Invitation to Ergodic Theory}.
\newblock American Mathematical Society, Providence, RI, 2007.

\bibitem{s:mastersthesis}
N.~D. Stein.
\newblock Characterization and computation of equilibria in infinite games.
\newblock Master's thesis, Massachusetts Institute of Technology, May 2007.

\bibitem{sop:slrcg}
N.~D. Stein, A.~Ozdaglar, and P.~A. Parrilo.
\newblock Separable and low-rank continuous games.
\newblock {\em International Journal of Game Theory}, 37(4):475 -- 504,
  December 2008.

\bibitem{spo:cecgcc}
N.~D. Stein, P.~A. Parrilo, and A.~Ozdaglar.
\newblock Correlated equilibria in continuous games: Characterization and
  computation.
\newblock {\em Games and Economic Behavior}, to appear.

\end{thebibliography}
\end{document}